\newcommand{\lleq}{\lessdot}
\newcommand{\seq}{\mathit{seq}}
\newcommand{\ftalu}{\sqsubseteq_{LU}}
\newcommand{\abslu}{\mathit{abs_{LU}}}
\newcommand{\aaa}{\mathfrak{a}}
\newcommand{\alu}{\mathfrak{a}_{\fleq \scriptscriptstyle{LU} }}
\newcommand{\lu}{\fleq_{\scriptscriptstyle LU}}
\renewcommand{\int}[1]{\lfloor #1 \rfloor}
\newcommand{\xra}{\xrightarrow}
\newcommand{\ceil}[1]{\lceil #1 \rceil}
\newcommand{\Closure}{\operatorname{\mathit{Closure}}}
\newcommand{\Extra}{\mathit{Extra}}
\newcommand{\abstr}{\mathfrak{a}}
\newcommand{\abs}{\abstr}
\newcommand{\tto}{\Rightarrow}
\newcommand{\tuple}[1]{\ensuremath{(#1)}}
\newcommand{\Rpos}[0]{\ensuremath{\mathbb{R}_{\geq 0}}}
\renewcommand{\Nat}[0]{\ensuremath{\mathbb{N}}}
\renewcommand{\CC}[0]{\ensuremath{\Phi}} 
\newcommand{\val}[0]{\ensuremath{v}} 
\newcommand{\vali}[0]{\ensuremath{\mathbf{0}}} 
\newcommand{\intpart}[1]{\ensuremath{\lfloor #1\rfloor}}
\newcommand{\fracpart}[1]{\ensuremath{\{ #1\}}}
\newcommand{\reset}[1]{\ensuremath{[#1]}}
\newcommand{\elup}[1]{\Extra^+_{LU}(#1)}
\newcommand{\Repeat}[2]{\noindent{$\blacktriangleright$\textsc{\textbf{#1~\ref{#2}.\
      }}}}
\newcommand{\tasim}{\preceq_{t.a.}}
\newcommand{\abssim}{\abs_{\tasim}}
\newcommand{\nbd}[1]{\operatorname{nbd}(#1)}
\newcommand{\ggeq}{\gtrdot}
\newcommand{\rrlu}[1]{\langle #1 \rangle^{\scriptscriptstyle LU}}
\newcommand{\gv}{G_v^{\scriptscriptstyle M}}
\newcommand{\gvlu}{G_v^{\scriptscriptstyle LU}}
\newcommand{\reg}[1]{[#1]^{\scriptscriptstyle M}}
\newcommand{\lureg}[1]{\langle #1 \rangle^{\scriptscriptstyle LU}}
\newcommand{\erlu}[1]{\rrlu{#1}}
\lstdefinelanguage{algo}{%
  morekeywords={function,algorithm,push,pop,top,for,all,and,or,if,then,else,repeat,until,while,do,report,return,such,that,each,add,call,exit,let}
}
\begin{document}

\begin{frontmatter}

  \title{Better Abstractions for Timed Automata\tnoteref{averts}}

  \author[labri]{Fr\'ed\'eric Herbreteau} \ead{fh@labri.fr}

  \author[cmi]{B. Srivathsan\corref{cor1}} \ead{sri@cmi.ac.in}

  \author[labri]{Igor Walukiewicz} \ead{igw@labri.fr}

  \address[labri]{Universit\'e de Bordeaux, Bordeaux INP, CNRS, LaBRI,
    UMR 5800, Talence, France}

  \address[cmi]{Chennai Mathematical Institute, Chennai, India}

  \tnotetext[averts]{This work has been supported by project AVeRTS -
    CEFIPRA - Indo-French Program in ICST - DST/CNRS
    ref. 218093. Author B. Srivathsan is partially funded by a grant
    from Infosys Foundation.}

  \cortext[cor1]{Corresponding author}
  
\begin{abstract}
  We study the reachability problem for timed automata. A standard
  solution to this problem involves computing a search tree whose
  nodes are abstractions of zones. These abstractions preserve
  underlying simulation relations on the state space of the automaton.
  For both effectiveness and efficiency reasons, they are
  parameterized by the maximal lower and upper bounds ($LU$-bounds)
  occurring in the guards of the automaton.
  
  One such abstraction is the $\alu$ abstraction defined by Behrmann
  et al. Since this abstraction can potentially yield non-convex sets,
  it has not been used in implementations. Firstly, we prove that
  $\alu$ abstraction is the coarsest abstraction with respect to
  $LU$-bounds that is sound and complete for reachability. Secondly,
  we provide an efficient technique to use the $\alu$ abstraction to
  solve the reachability problem.
\end{abstract}

\begin{keyword}
Timed automata, reachability problem, non-convex abstractions
\end{keyword}

\end{frontmatter}

\section{Introduction}

Timed automata are finite automata extended with clocks whose values
can be compared with constants and set to zero. The clocks measure
delays between different steps of execution of the automaton. The
reachability problem for timed automata asks if there exists a path
from its initial state to a given target state. This problem cannot be
solved by a simple state exploration since clocks are real-valued
variables.  The standard solution to this problem involves computing
the zone graph of the automaton which in principle could be
infinite. In order to make it finite, zones are approximated using an
abstraction operator. Till recently it has been generally assumed that
for reasons of efficiency an abstraction of a zone should always be a
zone. Here we avoid this assumption. We first show that $\alu$
abstraction defined by Behrmann et al.~\cite{Behrmann:STTT:2006} is
the coarsest sound and complete abstraction. We then present a method
of constructing the abstracted zone graph using the $\alu$
abstraction.  Even though this abstraction can yield non-convex sets,
we show that our method is at least as efficient as any other
currently known method based on abstractions.

The reachability problem is a basic problem in verification. It is
historically the first problem that has been considered for
timed-automata, and it is still a lively subject of
research~\cite{Behrmann:STTT:2006,Wang:STTT:2004,Morbe:CAV:2011,Herbreteau:FSTTCS:2011}. Apart
from being interesting by itself, the advances on this problem may
give new methods for verification of more complicated models, like
priced timed-automata~\cite{BFLM-cacm11}, or probabilistic timed
automata~\cite{gregersen-jensen,bouyer-hab2009,kwiatkowska2002automatic}.

All approaches to solving the reachability problem for timed automata
should ensure termination. To tackle this, most of them use
abstractions to group together bisimilar valuations of clock
variables, that is, valuations not distinguishable by the automaton as
far as reachability to the final state is concerned.  The first
solution has been based on regions, which are certain equivalence
classes of clock valuations~\cite{Alur:TCS:1994}. Their definition is
parameterized by a threshold up to which the clock values should be
considered.  A great improvement in efficiency has been obtained by
adopting zones instead of regions. These are sets of valuations
defined by conjunctions of differences between pairs of clocks. They
can be efficiently implemented using difference bound matrices
(DBMs)~\cite{DBLP:conf/ifip/BerthomieuM83,Dill:AVMFSS:1989}.  A
challenge with zone based approach
is that they are not totally compatible with regions, and moreover a
forward exploration algorithm can produce infinitely many zones.  The
union of regions intersecting a zone is a natural candidate for a
finitary abstraction. Indeed this abstraction would make the forward
exploration algorithm terminate. However such an union of regions is
not necessarily a zone and so it is not clear how to represent it. For
this reason a number of abstraction operators have been proposed that
give an approximation of the union of regions intersecting a zone. A
coarser approximation would make the abstracted zone graph smaller. So
potentially it would give a more efficient algorithm.

An important observation made in~\cite{Behrmann:STTT:2006} is that if
reachability is concerned then we can consider simulation instead of
bisimulation.  Indeed, it is safe to add configurations that can be
simulated by those that we have already reached. Simulation relations
in question depend on the given automaton, and it is \EXPTIME-hard to
calculate the biggest one~\cite{larsch00}. A pragmatic approach is to
abstract some part of the structure of the automaton and define a
simulation based on this information. The most relevant information is
the set of bounds with which clocks are compared to in the guards of
the automaton. Since lower and upper bounds are considered separately,
they are called $LU$-bounds.  In~\cite{Behrmann:STTT:2006} the authors
define an abstraction based on simulation with respect to $LU$-bounds;
it is denoted $\alu$. Theoretically $\alu$ is very attractive: it has
clear semantics and, as we show here, it is always a union of regions.
The problem is that $\alu$ abstraction of a zone is seldom a convex
set, so one cannot represent the result as a zone.

In this paper we give another very good reason to consider $\alu$
abstraction. We show that it is actually the coarsest abstraction that
is sound and complete with respect to reachability for all automata
with the same $LU$-bounds. In other words, it means that in order to
get coarser (that is better) abstractions one would need to look at
some other structural properties of automata than just
$LU$-bounds. Our main technical result is an effective algorithm for
dealing with $\alu$ abstraction. It allows to manipulate this
abstraction as efficiently as purely zone based ones.  We propose a
forward exploration algorithm working with zones that constructs the
$\alu$ abstraction of the transition graph of the automaton. This
algorithm uses standard operations on zones, plus a new test of
inclusion of a zone in the $\alu$ abstraction of another zone. The
test is quadratic in the number of clocks and not more complex than
that for just testing an inclusion between two zones. Since $\alu$
abstraction is the coarsest sound and complete abstraction, it can
potentially give smallest abstract systems.

\subsection{Related Work}

Forward analysis is the main approach for the reachability testing of
real-time systems. The use of zone-based abstractions for termination
has been introduced in~\cite{Daws:TACAS:1998}. In recent years,
coarser abstractions have been introduced to improve efficiency of the
analysis~\cite{Behrmann:STTT:2006}. An approximation method based on
LU-bounds, called $\Extra^+_{LU}$, is used in the current
implementation of
UPPAAL~\cite{Behrmann:QEST:2006}. In~\cite{Herbreteau:FSTTCS:2011} it
has been shown that it is possible to efficiently use the region
closure of $\Extra^+_{LU}$, denoted $\Closure^+_{LU}$. This has been
the first efficient use of a non-convex abstraction. In comparison,
$\alu$ approximation has a well-motivated semantics, it is also region
closed, coarser than $\Closure^+_{LU}$ and the resulting inclusion test
is even simpler than that of $\Closure^+_{LU}$. A comparison of these
abstractions is depicted in Fig.~\ref{fig:abst}. The $\alu$ inclusion
test (restricted to the case when $L = U$) has recently been adapted to weighted timed automata and priced
zones~\cite{DBLP:conf/cav/BouyerCM16}.

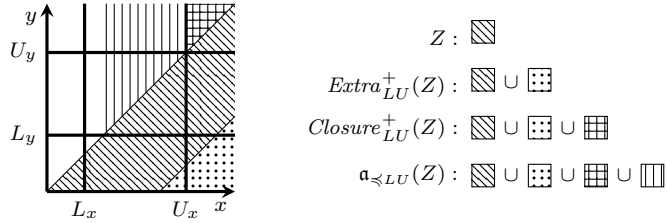
\begin{figure}[!t]
  \centering {
  \footnotesize
  \newlength{\N}
  \setlength{\N}{5cm}
  \newlength{\Lx}
  \setlength{\Lx}{1cm}
  \newlength{\Ux}
  \setlength{\Ux}{3.7cm}
  \newlength{\Ly}
  \setlength{\Ly}{1.5cm}
  \newlength{\Uy}
  \setlength{\Uy}{3.7cm}
  \newlength{\Dxy}
  \setlength{\Dxy}{3cm}
  \begin{tikzpicture}[scale=0.5]
    \coordinate (bl) at (0,0);
    \coordinate (tr) at (\N,\N);
    \begin{scope}[->,very thick,>=stealth]
      \draw (bl) -- ([yshift=\N]bl);
      \draw (bl) -- ([xshift=\N]bl);
    \end{scope}
    \node[below left] (x) at ([xshift=\N,yshift=-0.5mm]bl) {$x$};
    \node[below left] (y) at ([xshift=-0.5mm,yshift=\N]bl) {$y$};
    \begin{scope}[very thick]
      \draw ([xshift=\Lx]bl) -- ([xshift=\Lx,yshift=\N]bl); 
      \draw ([xshift=\Ux]bl) -- ([xshift=\Ux,yshift=\N]bl); 
      \draw ([yshift=\Ly]bl) -- ([xshift=\N,yshift=\Ly]bl); 
      \draw ([yshift=\Uy]bl) -- ([xshift=\N,yshift=\Uy]bl); 
    \end{scope}
    \node[below] (Lx) at ([xshift=\Lx,yshift=-0.5mm]bl) {$L_x$};
    \node[below] (Ux) at ([xshift=\Ux,yshift=-0.5mm]bl) {$U_x$};
    \node[left] (Ly) at ([xshift=-0.5mm,yshift=\Ly]bl) {$L_y$};
    \node[left] (Uy) at ([xshift=-0.5mm,yshift=\Uy]bl) {$U_y$};
    \path[fill,pattern=north west lines]
    (bl) --
    ([xshift=\N,yshift=\N]bl) --
    ([xshift=\N,yshift=\N-\Dxy]bl) --
    ([xshift=\Dxy]bl);
    \draw (bl) -- ([xshift=\N,yshift=\N]bl); 
    \draw ([xshift=\Dxy]bl) -- ([xshift=\N,yshift=\N-\Dxy]bl); 
    \path[fill,pattern=dots]
    ([xshift=\Dxy]bl) --
    ([xshift=\N,yshift=\N-\Dxy]bl) --
    ([xshift=\N]bl);
    \path[fill,pattern=grid]
    ([xshift=\Ux,yshift=\Uy]bl) --
    ([xshift=\Ux,yshift=\N]bl) --
    ([xshift=\N,yshift=\N]bl);
    \path[fill,pattern=vertical lines]
    ([xshift=\Ly,yshift=\Ly]bl) --
    ([xshift=\Ly,yshift=\N]bl) --
    ([xshift=\Ux,yshift=\N]bl) --
    ([xshift=\Ux,yshift=\Ux]bl);
    \begin{scope}[xshift=7.4cm,yshift=5.6cm,scale=1.5]
      \node[left] (Z) at (2.5,-1) {$Z:$};
      \draw[fill,pattern=north west lines] (2.6,-1.1) rectangle (3,-0.7);
      \node[left] (ExtraLUZ) at (2.5,-1.85) {$\elup{Z}:$};
      \draw[fill,pattern=north west lines] (2.6,-1.95) rectangle
      (3,-1.55);
      \node at (3.3,-1.8) {$\cup$};
      \draw[fill,pattern=dots] (3.6,-1.95) rectangle (4,-1.55);
      \node[left] (ClosureLUZ) at (2.5,-2.6) {$\Closure_{LU}^+(Z):$};
      \draw[fill,pattern=north west lines] (2.6,-2.8) rectangle
      (3,-2.4);
      \node at (3.3,-2.6) {$\cup$};
      \draw[fill,pattern=dots] (3.6,-2.8) rectangle (4,-2.4);
      \node at (4.3,-2.6) {$\cup$};
      \draw[fill,pattern=grid] (4.6,-2.8) rectangle (5,-2.4);
      \node[left] (aLUZ) at (2.5,-3.45) {$\alu(Z):$};
      \draw[fill,pattern=north west lines] (2.6,-3.65) rectangle
      (3,-3.25);
      \node at (3.3,-3.45) {$\cup$};
      \draw[fill,pattern=dots] (3.6,-3.65) rectangle (4,-3.25);
      \node at (4.3,-3.45) {$\cup$};
      \draw[fill,pattern=grid] (4.6,-3.65) rectangle (5,-3.25);
      \node at (5.3,-3.45) {$\cup$};
      \draw[fill,pattern=vertical lines] (5.6,-3.65) rectangle
      (6,-3.25);
    \end{scope}
  \end{tikzpicture}
}

  \caption{A comparison of abstraction operators for zones.}
  \label{fig:abst}
\end{figure}

Let us mention that abstractions are not needed in backward
exploration of timed systems. Nevertheless, any feasible backward
analysis approach needs to simplify constraints. For
example~\cite{Morbe:CAV:2011} does not use approximations and relies
on an SMT solver instead. Clearly this approach is very difficult to
compare with the forward analysis approach we study here.

Another related approach to verification of timed automata is to build
a quotient graph of the semantic graph of the automaton with respect
to some bisimulation
relation~\cite{Tripakis:FMSD:2001,alur1992minimization,Yannakakis:FMSD:1997}.
For reachability properties, this approach is not a priori competitive
with respect to using the simulation-based abstraction $\alu$. It is
more adapted to checking branching time properties.

\subsection{Organization of the Paper}

The paper is organized as follows.

\begin{description}
\item \textit{Section 2} presents the preliminary definitions. It
  introduces the notion of sound and complete abstractions
  parameterized by LU-bounds. It also explains how these abstractions
  could be used to solve the reachability problem.

\item \textit{Section 3} proposes $\abslu$ abstraction, and proves
  that it is the coarsest sound and complete abstraction for all
  automata with given $LU$-bounds.

\item \textit{Section 4} shows that the $\alu$ abstraction
actually coincides with this coarsest abstraction
$\abslu$.

\item \textit{Section 5} presents an efficient inclusion test for
  $\alu$ abstraction, which allows for its use in implementations. 
\end{description}

\medskip 

A preliminary version of this paper appeared in the conference on
Logic in Computer Science~\cite{LICS}. This version includes all  
missing proofs and gives a more elaborate discussion about the inclusion
test (Section~\ref{sec:an-oox2-algorithm}).


 \section{Preliminaries}
\label{sec:preliminaries}

After recalling some preliminary notions, we introduce a concept of
abstraction as a means to reduce the reachability problem for
timed-systems to the one for finite systems. We then observe that
simulation relation is a convenient way of obtaining abstractions with
good properties. 

\subsection{Timed automata and the reachability problem}
Let $X$ be a set of clocks, i.e., variables that range over $\Rpos$,
the set of non-negative real numbers. A \emph{clock constraint} is a
conjunction of constraints $x\# c$ for $x\in X$,
$\#\in\{<,\leq,=,\geq,>\}$ and $c\in \Nat$, e.g. $(x \le 3 \wedge y >
0)$. Let $\CC(X)$ denote the set of clock constraints over clock
variables $X$.  A \emph{clock valuation} over $X$ is a function
$\val\,:\,X\rightarrow\Rpos$. We denote by $\Rpos^X$ the set of clock
valuations over $X$, and by $\vali$ the valuation that associates $0$ to
every clock in $X$. We write $\val\sat \phi$ when $\val$ satisfies
$\phi\in \CC(X)$, i.e. when every constraint in $\phi$ holds after
replacing every $x$ by $\val(x)$. For $\d\in\Rpos$, let $\val+\d$ be
the valuation that associates $\val(x)+\d$ to every clock $x$. For
$R\subseteq X$, let $\reset{R}\val$ be the valuation that sets $x$ to
$0$ if $x\in R$, and that sets $x$ to $\val(x)$ otherwise.

A \emph{Timed Automaton (TA)} is a tuple $\Aa=\tuple{Q,q_0,X,T,\Acc}$
where $Q$ is a finite set of states, $q_0\in Q$ is the initial state,
$X$ is a finite set of clocks, $\Acc\subseteq Q$ is a set of accepting
states, and $T\,\subseteq\, Q\times \CC(X)\times 2^X \times Q$ is a
finite set of transitions $\tuple{q,g,R,q'}$ where $g$ is a
\emph{guard}, and $R$ is the set of clocks that are \emph{reset} on
the transition.

The semantics of $\Aa$ is a transition system of its configurations.
A \emph{configuration} of $\Aa$ is a pair $(q,v)\in Q\times\Rpos^X$
and $(q_0,\vali)$ is the \emph{initial configuration}. We have two
kinds of transitions:

\smallskip

\par\noindent\textbf{Delay:} $(q,v)\to^\d(q,v+\d)$ for some $\d\in
\Rpos$;

\smallskip

\par\noindent\textbf{Action:} $(q,v)\to^t(q',v')$ for some transition
$t = (q,g,R,q')\in T$ such that $v\sat g$ and $v'=[R]v$.

\smallskip

We will denote by $\Ss_\Aa$ the transition system describing the semantics of a
timed automaton $\Aa$. In this paper we are interested in the \emph{reachability problem}:
does there exist a configuration $(q,v)$ with accepting state
$q \in Acc$ that is reachable from $(q_0,\vali)$ by some finite
sequence of delay and action transitions?

The class of TA we consider is usually known as diagonal-free TA since
clock comparisons like $x-y\leq 1$ are disallowed. Notice that if we
are interested in state reachability, considering timed automata
without state invariants does not entail any loss of generality as the
invariants can be added to the guards. For state reachability, we can
also consider automata without transition labels.

\subsection{Abstractions}

Since the transition system determined by the automaton is infinite,
we usually try to find a finite approximation of it by grouping
valuations together. In consequence we work with configurations
consisting of a state and a set of valuations. For every transition
$t$, we have a transition:
\begin{align*}
  (q,W) \tto^t (q',W') \quad \text{where $W'=\set{v' :\exists v\in W,
      ~\exists \d \in \Rpos \text{ s.t. } \ v\to^t \to^\d v'}$}
\end{align*}
We will write $\tto$ without superscript to denote the union of all
$\tto^t$ relations.
The transition relation defined above considers each valuation $v \in
W$ that can take the transition $t$, obtains the valuation after the
transition, and then collects the time-successors from this obtained
valuation. Therefore the symbolic transition $\tto$ always yields sets
closed under time-successors.  

The initial configuration of the
automaton is $(q_0, \vali)$. Starting from the initial valuation
$\vali$ the set of valuations reachable by a delay at the
initial state are given by $\{ \vali + \d ~|~ \d \in \Rpos\}$. Call
this $W_0$. 

From $(q_0, W_0)$ as the initial node, computing the
symbolic transition relation $\tto$ leads to different nodes $(q, W)$
wherein the sets $W$ are closed under time-successors. Although the
transition relation $\tto$ talks about sets of valuations and not
valuations themselves, it could still be potentially infinite. A
further grouping of valuations is necessary to get finiteness.

An \emph{abstraction operation}~\cite{Behrmann:TACAS:2003} is a
convenient way of expressing a grouping of valuations. It is a
function $\abstr:\Pp(\Rpos^{|X|})\to\Pp(\Rpos^{|X|})$ such that
$W\incl\abstr(W)$ and $\abstr(\abstr(W))=\abstr(W)$. An abstraction
operator defines an abstract semantics:
\begin{displaymath}
  (q,W) \tto_{\abstr }(q',\abstr(W'))
\end{displaymath}
where $\abstr(W)=W$ and $(q,W)\tto (q',W')$.

If $\abstr$ has a finite range then this abstraction is said to be
finite. We write $\tto_\abs^*$ for the transitive closure of
$\tto_\abs$, similarly we write $\tto^*$ and $\to^*$ respectively for
the transitive closure of $\tto$ and $\to$ (where $\to$ denotes the
union of $\to^t$ and $\to^\d$).

Of course we want this abstraction to reflect some properties of the
original system. In order to preserve reachability properties we can
require the following two properties (recall that $\vali \in W_0$):

\smallskip

\par\noindent\textbf{Soundness:} if $(q_0,\abstr(W_0))\tto_{\abstr}^*(q,W)$
then there is a $v\in W$ such that $(q_0,\vali)\to^*(q,v)$.

\smallskip

\par\noindent\textbf{Completeness:} if $(q_0,\vali)\to^*(q,v)$ then
there is a
$W$ such that $v\in W$ and
$(q_0,\abstr(W_0))\tto_{\abstr}^*(q,W)$.

\smallskip

It can be easily verified that if an abstraction satisfies $W\incl
\abstr(W)$ then the abstracted system is complete. However soundness
is more delicate to obtain.

Naturally, it is important to be able to efficiently compute the
abstract transition system. A standard way to do this is to use
zones. A \emph{zone}\label{page:zone def} is a set of valuations
defined by a conjunction of two kinds of constraints: comparison of
differences between two clocks with an integer like $x-y\# c$, or
comparison of a single clock with an integer like $x\# c$, where
$\#\in\{<,\leq,=,\geq,>\}$ and $c\in\Nat$. For instance $(x-y\geq
1)\land(y<2)$ is a zone. Zones can be efficiently represented using
difference bound matrices
(DBMs)~\cite{DBLP:conf/ifip/BerthomieuM83,Dill:AVMFSS:1989}. This
suggests that one should consider abstractions that give zones. As
zones are convex, abstractions that range over zones are called convex
abstractions. This is an important restriction as abstractions based
on regions are usually not convex~\cite{Bouyer:FMSD:2004}.

We propose a way to use non-convex abstractions and zone
representations at the same time. We will only consider sets $W$ of
the form $\abstr(Z)$ and represent them simply by the zone $Z$. This
way we can represent states of an abstract transition system
efficiently: we need just to store a zone. In order for this to work
we need to be able to compute the transition relation $\tto_\abs$ on
this representation. We also need to know when two zone
representations stand for the same node in the abstract system. This
is summarized in the following two requirements:

\smallskip

\par\noindent\textbf{Transition compatibility:} for every transition
$(q,\abstr(Z))\tto_{\abstr}(q',W')$ and the matching transition
$(q,Z)\tto(q',Z')$ we have $W'= \abstr(Z')$.

\smallskip

\par\noindent\textbf{Efficient inclusion test:} for every two zones
$Z,Z'$, the test $Z'\incl \abstr(Z)$ is efficient. Ideally, it has the
same complexity $\mathcal{O}(|X|^2)$ as the inclusion $Z' \incl Z$.

\smallskip

The first condition is quite easy to satisfy: We will show quickly
below that every 
abstraction relation coming from time-abstract
simulation~\cite{Tasiran:CONCUR:1996} is transition compatible. The
second condition is the main topic of the paper.

\begin{definition}[Time-abstract simulation]
  \label{def:time-abstr-simulation}
  A \emph{(state based) time-abstract simulation} between two states
  of a transition system $\Ss_\Aa$ is a relation $(q, v) \tasim (q',v')$
  such that:
  \begin{itemize}
  \item $q = q'$,
  \item if $(q,v) \to^\d (q, v + \d) \to^t (q_1, v_1)$, then there
    exists a $\d' \in \Rpos$ such that $(q,v') \to^{\d'} (q, v' + \d')
    \to^t (q_1, v_1')$ satisfying $(q_1,
    v_1) \tasim (q_1,v_1')$. 
  \end{itemize}
\end{definition}

For two valuations $v,v'$, we say that $v \tasim v'$ if for every
state $q$ of the automaton, we have $(q,v) \tasim (q',v')$.  An
abstraction $\abssim$ based on a simulation $\tasim$ can be defined as
follows:

\begin{definition}[Abstraction based on simulation]
  \label{def:abstr-based-ta-sim}
  Given a set $W$, we define $\abssim(W) = \{ v~|~ \exists v' \in W.\
  v \tasim v'\}$.
\end{definition}

\begin{definition}[Timed-elapsed zone]
  \label{def:time-elapsed-zone}
  A zone $Z$ is said to be \emph{time-elapsed} if it is closed under
  time-successors: that is $Z = \{v + \d ~|~ v \in Z, \d \in \Rpos\}$.
\end{definition}

We had previously noted that all nodes $(q, W)$ reachable through
$\tto^*$ from the initial node $(q_0, W_0)$ are all time elapsed. We
can now show that transition relations coming from abstractions based
on time-abstract simulations satisfy the transition compatibility
condition.

\begin{lemma}\label{lem:transition-compatible}
  Let $\abssim$ be an abstraction based on a time-abstract simulation
  relation and let $Z$ be a time-elapsed zone.  For every transition
  $(q, \abssim(Z)) \tto_{\abssim} (q', W')$ and the matching
  transition $(q,Z) \tto (q',Z')$, we have $W' = \abssim(Z')$.
\end{lemma}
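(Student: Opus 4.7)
The plan is to unpack both sides using the definitions of $\tto_{\abstr}$, $\tto$ and $\abstr$, and then verify two set inclusions. By definition, $(q, \abstr(Z)) \tto_{\abstr} (q', W')$ means there is some $U$ with $(q, \abstr(Z)) \tto (q', U)$ and $W' = \abstr(U)$, where the underlying delay or action transition is precisely the same one that produces $Z'$ from $Z$ in the matching transition $(q,Z) \tto (q',Z')$. So it suffices to prove the equality $\abstr(U) = \abstr(Z')$.

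For the inclusion $\abstr(Z') \subseteq \abstr(U)$ I would invoke monotonicity twice. The lifted relation $\tto$ (in either its action or delay form) is monotone in its set argument, directly from the existential shape of its definition; and $\abstr$ is monotone because $\abstr(W) = \{v : \exists v' \in W.\ v \tasim v'\}$ is itself existential in the input set. From $Z \subseteq \abstr(Z)$ one gets $Z' \subseteq U$, and then $\abstr(Z') \subseteq \abstr(U) = W'$.

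For the opposite inclusion I would chase a typical element. Let $w \in \abstr(U)$; unpacking $\abstr$ gives $u \in U$ with $w \tasim u$, and unpacking $U$ gives $v \in \abstr(Z)$ which steps to $u$ under the underlying transition. Unpacking $\abstr(Z)$ once more, there is $v_0 \in Z$ with $v \tasim v_0$. The key move is then to push the step $v \to u$ along the simulation $v \tasim v_0$ to obtain a matching step $v_0 \to u_0$ with $u \tasim u_0$; by construction of $Z'$ this places $u_0 \in Z'$. Transitivity of the simulation preorder then yields $w \tasim u_0$, so $w \in \abstr(Z')$.

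The main obstacle I expect is precisely this last diagram chase. The stated definition of $\tasim$ couples a delay with a following action, whereas $\tto$ is the union of pure delay and pure action moves, so I would have to argue that $\tasim$ transports along each kind of step in isolation. The delay case reduces to the standard observation that a time-abstract simulation lifts through time elapse on the simulating side: from $v \tasim v_0$ one gets $v+\d \tasim v_0+\d'$ for a suitable $\d'$. The action case falls out by instantiating the definition with zero preceding delay, and using that the simulation relations considered in this paper are compatible with guards, so the matching delay can also be taken to be zero and one extracts a pure action step $v_0 \to^\a u_0$ with $u \tasim u_0$.
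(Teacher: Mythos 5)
Your proof follows essentially the same route as the paper's: both inclusions are established by unpacking the existential definitions, with the non-trivial direction being the chase $w \tasim u$, $u$ a successor of $v$, $v \tasim v_0 \in Z$, then transporting the step along the simulation to land in $Z'$ and concluding by transitivity of $\tasim$. Your monotonicity phrasing of the easy inclusion is a clean repackaging of what the paper does element-wise.

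The one place where you diverge is the handling of the mismatch between the delay-then-action shape of the simulation definition and the separate $\tto^\t$ and $\tto^\a$ moves, and your patch for the pure-action case does not hold up. For a general time-abstract simulation (which is all the lemma assumes), $v \tasim v_0$ does \emph{not} imply that $v_0$ satisfies the same guards as $v$ without delaying; the whole point of time-abstraction is that the simulating side may need a nonzero $\d'$ before it can fire the action. So "the matching delay can also be taken to be zero" is false in general, and with a nonzero $\d'$ the matched successor $u_0$ need not lie in the set $Z'$ produced by a \emph{pure} action step from $Z$. The paper avoids this by declaring, without loss of generality, that $\tto$ denotes a time-transition followed by an action --- which is exactly the composite transition used by the exploration algorithm (it only ever applies $\tto^\a$ to time-elapsed zones and immediately re-elapses). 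If you want to treat the two kinds of moves in isolation, the delay case goes through (instantiate the simulation with a trivially true guard and empty reset), but the action case genuinely requires either folding the preceding delay into the transition as the paper does, or strengthening the hypothesis on the simulation.
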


\begin{proof}
  Let $t$ be the transition corresponding to $\tto_{\abssim}$ and
  $\tto$. We first prove that $W' \incl \abssim(Z')$. Let $v' \in
  W'$. We will show that there exists a valuation in $Z'$ that
  simulates $v'$ with respect to $\tasim$.

  By the definition of the abstract symbolic transition $(q,
  \abssim(Z)) \tto_{\abssim} (q', W')$, there is a valuation $v_1 \in
  \abssim(Z)$ and a time elapse $\d_1 \in \Rpos$ such that:
  \begin{align*}
    (q,v_1) \to^t \to^{\d_1} (q', v_1') \text{ and } v' \tasim v_1'
  \end{align*}
  Firstly consider the intermediate configuration obtained after the
  $\to^t$ transition from $(q, v_1)$. Call it $(q', \bar{v}_1')$. We
  know that $\bar{v}_1' \in W'$. This valuation $\bar{v}_1'$ can
  elapse a time $\d_1$ and become $v_1'$. Given that $\tasim$ is a
  time-abstract simulation, this intermediate valuation $\bar{v}_1'$
  can simulate $v'$ too:
  \begin{align}\label{eq:1}
    (q, v_1) \to^t (q', \bar{v}_1') \text{ and } v' \tasim \bar{v}_1'
  \end{align}
  Recall that $v_1 \in \abssim(Z)$. Therefore, there exists a
  valuation $v_2 \in Z$ such that $v_1 \tasim v_2$. As $(q, v_1)$ can
  take the transition $\to^t$, by definition of time-abstract
  simulation, there exists a time elapse $\d_2$ such that $(q, v_2)$
  can take the transition after the time elapse $\d_2$:
  \begin{align}\label{eq:2}
    (q, v_2) \to^{\d_2} \to^t (q', \bar{v}_2') \text{ and } \bar{v}_1'
    \tasim \bar{v}_2'
  \end{align}
  From (\ref{eq:1}) and (\ref{eq:2}), we see that $v' \tasim
  \bar{v}_2'$. Note that as $Z$ is a time-elapsed zone and since $v_2
  \in Z$, we also have $v_2 + \d_2 \in Z$ and this in turn implies
  that $\bar{v}_2' \in Z'$. This shows that $v' \in \abssim(Z')$ and
  hence $W \incl \abssim(Z')$.

  We will now show the converse: $\abssim(Z') \incl W'$.  Let $v \in
  \abssim(Z')$. Then, there exists $v_1 \in Z$ and a $\d_1 \in \Rpos$
  such that $(q,v_1) \to^{t} \to^{\d_1} (q',v_1')$ and $v \tasim
  v_1'$. By the property of an abstraction operator, we will have $v_1
  \in \abssim(Z)$ too. Now, directly by the definition of $(q,
  \abssim(Z)) \tto^{\abssim} (q',W')$, we get that $v \in W'$ and
  hence $\abssim(Z') \incl W'$.

\end{proof}

This paper is essentially about how to satisfy the ``efficient
inclusion test'' condition and get as good abstraction as possible at
the same time. In this context, let us highlight an important remark
that describes when an abstraction is better than another.

\begin{remark}
  \label{rem:coarse-abstractions}
  If $\abstr$ and $\mathfrak{b}$ are two abstractions such that for
  every set of valuations $W$, we have
  $\abstr(W) \incl \mathfrak{b}(W)$, we prefer to use $\mathfrak{b}$
  since the graph induced by $\mathfrak{b}$ is \emph{a priori} smaller
  than the one induced by $\abstr$ (sic \cite{Behrmann:STTT:2006}). In
  such a case, the abstraction $\mathfrak{b}$ is said to be
  \emph{coarser} than abstraction $\abs$.
\end{remark}

\subsection{Bounds as parameters for abstraction}

Remark~\ref{rem:coarse-abstractions} suggests to make use of the coarsest possible
abstraction. For a given automaton it can be
computed if two configurations are in a simulation relation. It should
be noted though that computing the biggest simulation relation is
\EXPTIME-hard~\cite{larsch00}. Since the reachability problem can be
solved in \PSPACE{}, this suggests that it may not be reasonable to
try to solve it using the abstraction based on the biggest
simulation.

We can get simulation relations that are computationally easier if we
consider only a part of the structure of the automaton. The simplest
is to take a simulation based on the maximal constant that appears in
guards. More refined is to take the maximum separately over constants
from lower bound constraints, that is in guards of the form $x>c$ or
$x\geq c$, and those from upper bound constraints, that is in guards
$x<c$ or $x\leq c$.  If one moreover does this for every clock $x$
separately, one gets for each clock two integers $L_x$ and $U_x$. The
abstraction that is currently most used is a refinement of this method
by calculating $L_x$ and $U_x$ for every state of the automaton
separately~\cite{Behrmann:TACAS:2003}. For simplicity of notation we
will not consider this optimization but it can be incorporated with no
real difficulty in everything that follows.  We summarize this
presentation in the following definition.

\begin{definition}[LU-bounds]
  \label{def:LU-bounds}
  The $L$ bound for an automaton $\Aa$ is the function assigning to
  every clock a maximal constant that appears in a lower bound guard
  for $x$ in $\Aa$. Similarly $U$ but for upper bound guards. An
  \emph{LU-guard} is a guard where lower bound guards use only
  constants bounded by $L$ and upper bound guards use only constants
  bounded by $U$. An \emph{LU-automaton} is an automaton using only
  LU-guards.
\end{definition}

In the rest of the paper, we try to find good abstractions
parameterized by LU-bounds that also have an efficient inclusion
test. Section~\ref{sec:bigg-lu-abstr} defines an abstraction $\abslu$
and proves that this is the optimal sound and complete abstraction
that is based on LU-bounds. Section~\ref{sec:abstr-alu-coinc} then
shows that the $\alu$ abstraction is the same as $\abslu$ when zones
closed under time successors are considered. We then give an efficient
test $Z \incl \alu(Z')$ in Section~\ref{sec:an-oox2-algorithm}, which
enables the use of $\alu$ in implementations.

 \section{The coarsest LU abstraction}
\label{sec:bigg-lu-abstr}

We call an abstraction that is based on LU bounds an \emph{LU
abstraction}. A natural question is to know what is the
coarsest LU-abstraction sound and complete for reachability testing.
Given $L$ and $U$ bounds, we know that the automata under
consideration have guards only of the following form (with $\lleq \in
\{<, \le\}$ and $\ggeq \in \{>, \ge\}$) :
\begin{align*}
  x \ggeq 0,~ x \ggeq 1~, \dots,~ x \ggeq L_x \\
  x \lleq 0, ~x \lleq 1~, \dots,~ x \lleq U_x
\end{align*}
However, we do not know the shape of the automata, in particular, the
order in which the above guards appear in the paths of the automata.

An abstraction $\abs$ is sound if for every possible path using the
above guards that a valuation $v \in \abs(W)$ can
execute, there is a representative $v' \in W$ that can execute the
same path. If this rule is not followed, there is one possible
automaton with guards respecting the given LU-bounds for which this
abstraction is not sound. Hence our question can be reworded as
follows: 

\begin{quote}
  Given $L$ and $U$ bounds, what is the coarsest abstraction that is
  sound and complete for all LU-automata?
\end{quote}

We answer this question in four steps:

\begin{description}
\item \textbf{Step 1.} We define a generic simulation relation
  $\ftalu$ (Definition~\ref{def:lu-simulation}) which is a union over
  all time-abstract simulation relations on LU-automata. Roughly the
  simulation relation says that $v \ftalu v'$ if all paths, using
  LU-guards, executed by $v$ can be executed by $v'$.  We define an
  abstraction $\abslu$ that is based on LU-simulation
  (Definition~\ref{def:abslu}). The definition of LU-simulation is
  difficult to work with as it talks about infinite sequences of
  transitions.

\item \textbf{Step 2.} The next aim is to characterize this
  LU-simulation using a finite sequence of transitions
  (Definition~\ref{def:lu-sequence}).  We want to come up with a
  sequence of LU-guards $\seq(v)$ executed by a valuation $v$ for
  which we can say $v \ftalu v'$ iff $v'$ executes this characteristic
  sequence $\seq(v)$. To achieve this, we go through an intermediate
  definition of what we call LU-regions
  (Definition~\ref{def:lu-region}). We define this sequence in
  Definition~\ref{def:lu-sequence}.

\item \textbf{Step 3.} Steps 1 and 2 have defined the necessary
  notions. We now observe that the following are equivalent
  (Proposition~\ref{prop:rLU-sim}, Corollary~\ref{cor:sequence}):
  \begin{itemize}
  \item $v \ftalu v'$,
    \item $v'$ can execute $\seq(v)$.
  \end{itemize}
 
\item \textbf{Step 4.} The previous step gives a finite
  characterization of the generic LU-simulation $\ftalu$. We use this
  to prove that every sound abstraction should be contained in
  $\abslu$, in other words $\abslu$ is the coarsest abstraction sound
  and complete for all LU-automata (Theorem~\ref{thm:alu-biggest}).
\end{description}

Section~\ref{sec:lu-simulation} handles Step 1;
Section~\ref{sec:lu-regions} defines the LU-regions as mentioned in
Step 2; Sections~\ref{sec:finite-sequ-char}
and~\ref{sec:proof-optimality} handle Steps 3 and 4 respectively.

\subsection{LU-simulation}
\label{sec:lu-simulation}

Using LU-bounds we define a simulation relation on valuations without
referring to any particular automaton; or to put it differently, by
considering all LU-automata at the same time.

\begin{definition}[LU-simulation]\label{def:lu-simulation}
  Let $L$, $U$ be two functions giving an integer bound for every
  clock. The \emph{LU-simulation relation} between valuations is the
  biggest relation $\ftalu$ such that if $v\ftalu v'$ then for every
  LU-guard $g$, and set of clocks $R\incl X$ we have
  \begin{itemize}
  \item if $v\act{g,R} v_1$ for some $v_1$ then $v'\act{g,R} v'_1$ for
    $v'_1$ such that $v_1\ftalu v'_1$.
  \end{itemize}
  where $v\act{g,R} v_1$ means that for some $\d\in\Rpos$ we have
  $v+\d\sat g$ and $v_1=[R](v+\d)$.
\end{definition}

Note that in the above definition, the time elapse $\d'$ required for
$v'$ to satisfy the guard $g$ could be different from the time elapse
$\d$ required for $v$ to satisfy the guard $g$. It is immediate that
$\ftalu$ is the biggest relation that is a time-abstract simulation
for all automata with given LU bounds. We define abstraction operator
$\abslu$ to be the abstraction based on this LU-simulation.

\begin{definition}[Abstraction based on LU-simulation]\label{def:abslu}
  For a zone $Z$ we define: $\abslu(Z)=\set{v ~|~\exists v'\in Z.\
    v\ftalu v'}$.
\end{definition}

The definition of LU-simulation is sometimes difficult to work with
since it talks about infinite sequences of actions.  We will present a
useful characterization implying that actually we need to consider
only very particular sequences of transitions that are of length
bounded by the number of clocks (Corollary~\ref{cor:sequence}).
Essentially, we are interested in the following question: given a
valuation $v$, when does a valuation $v'$ $LU$-simulate it, that is,
when is $v \ftalu v'$. We start with a preparatory definition of what
we call $LU$-regions.

\subsection{LU-regions}
\label{sec:lu-regions}

We introduce the notion of LU-regions. The classical notion of
regions~\cite{Alur:TCS:1994}
depends on the maximum bounds function $M$. Given only the maximum
bounds $M$, we know that there could be guards $x \lleq c$ and $x
\ggeq c$ for $c \in \{ 0, \dots, M_x\}$ in the automaton. Let us call
them the \emph{M-guards}. However,
with the LU-bounds, there is more information and consequently fewer
guards: $x \lleq c$ for $c \in \{ 0, \dots, U_x\}$, and $x \ggeq c$
for $c \in \{ 0, \dots, L_x\}$. Note that for each $x$, we have $M_x =
\max(L_x, U_x)$.

Let us denote the region of a valuation $v$ by $\reg{v}$. A valuation $v'$ belongs to the region $\reg{v}$
if two properties are satisfied:
\begin{description}\label{page:property-regions}
\item \textbf{Invariance by guards:} $v'$ satisfies all $M$-guards that
  $v$ satisfies,
\item \textbf{Invariance by time-elapse:} for every time elapse $\d
  \in \Rpos$, there is a $\d' \in \Rpos$ such that $v' + \d' \in
  \reg{v+\d}$.
\end{description}
We would like to define a notion of LU-regions in the same spirit, now
with the additional information on the guards. For this discussion let
us fix some $L$ and $U$ functions.

\begin{definition}[LU-region]~\label{def:lu-region} For a valuation $v$ we define
  its \emph{LU-region}, denoted $\rrlu{v}$, to be the set of
  valuations $v'$ such that:
  \begin{itemize}
  \item $v'$ satisfies all $LU$-guards that $v$ satisfies.
 
  \item For every pair of clocks $x,y$ with
    $\intpart{v(x)}=\intpart{v'(x)}$, $\intpart{v(y)} =
    \intpart{v'(y)}$, $v(x)\leq U_x$ and $v(y)\leq L_y$ we have:
    \begin{itemize}
    \item if $0< \fracpart{v(x)} < \fracpart{v(y)}$ then
      $\fracpart{v'(x)} < \fracpart{v'(y)}$.
    \item if $0 < \fracpart{v(x)} = \fracpart{v(y)}$ then
      $\fracpart{v'(x)} \leq \fracpart{v'(y)}$.
    \end{itemize}
  \end{itemize}
\end{definition}

\medskip

The first invariance with respect to guards
has been directly incorporated in the first condition of the
definition. The second condition in the definition of LU-regions has
been added in order to obtain the invariance by time-elapse property
mentioned. Note that it is possible to have $v' \in \lureg{v}$ but $v
\notin \lureg{v'}$. The
following lemma will now show that with the two conditions specified
in the definition, one can achieve the invariance with respect to
time-elapse.

\begin{lemma}\label{lem:lu-reg-time}
  Let $v,v'$ be valuations such that $v' \in \rrlu{v}$. For all $\d
  \in \Rpos$, there exists a $\d' \in \Rpos$ such that $v'+\d' \in
  \rrlu{v+\d}$.
\end{lemma}
\begin{proof}
 We are given valuations $v$ and $v'$ such that $v' \in
  \lureg{v}$. Therefore, $v'$ satisfies all the LU-guards that $v$ satisfies, and
  the property given by second condition of
  Definition~\ref{def:lu-region}  is true for the ordering of
  fractional parts. Let us call it the \emph{order property}.
  Additionally, we are given a time elapse $\d \in \Rpos$ from the
  valuation $v$. We need to construct a value $\d' \in \Rpos$ such
  that $v' + \d' \in \lureg{v + \d}$.

  \paragraph{Assume $\d < 1$} Without loss of generality, we can
  assume that $\d < 1$. If $\d \ge 1$, then we can put $\intpart{\d'}
  = \intpart{\d}$ and consider the valuations $v + \intpart{\d}$ and
  $v' + \intpart{\d'}$. As we are not altering the fractional parts in
  these valuations, the order property is true for $v + \intpart{\d}$ and
  $v' + \intpart{\d'}$. It is also easy to see that as $v'$ satisfies
  all LU-guards that $v$ does, the valuation $v' + \intpart{\d'}$
  satisfies all LU-guards that $v + \intpart{\d}$ does. This gives us
  $v' + \intpart{\d'} \in \lureg{v + \intpart{\d}}$ and we need to
  consider the time elapse $\fracpart{\d}$ from $v+
  \intpart{\d}$. Therefore, in the rest of the proof, without loss of
  generality, we can assume that $\d < 1$.

  \paragraph{Assume $\intpart{v(z)} = \intpart{v'(z)}$ for all clocks
    $z$} Suppose for a clock $z$, we have $\intpart{v(z)} <
  \intpart{v'(z)}$. Then, as $v'$ satisfies all guards which $v$ does,
  it should be the case that $U_z < \intpart{v(z)}$. All guards having
  constants in between $\intpart{v(z)}$ and $\intpart{v'(z)}$ would be
  lower bound guards. So, irrespective
  of what we choose for $\d'$, the value $v'(z) + \d'$ will satisfy
  all the LU-guards with respect to $z$ that $v(z) + \d$
  satisfies. Also, $z$ does not concern the order property 
  at all. Similarly, if $\intpart{v'(z)} < \intpart{v(z)}$, as $v'$
  satisfies all LU-guards that $v$ satisfies, it should be the case
  that $L_z < \intpart{v'(z)} < \intpart{v(z)}$. In both the cases, we
  can safely ignore the clock $z$. 

  \paragraph{Assume $\intpart{v(z)} \le \max(L_z, U_z)$ for all clocks
    $z$}  For a clock $z$ such that $\intpart{v(z)}$ is greater than
  both $L_z$ and $U_z$, we know that $v'(z)$ should be greater than
  $L_z$ in order to satisfy the same LU-guards. Hence any amount of
  time elapse would maintain this property and additionally such
  clocks do not concern order property. Hence, we assume without loss of
  generality that all clocks are less than at least one of the bounds.

\paragraph{Constructing $\d'$}
We now have $v$ and $v'$ such that for all clocks $z$, the integral
parts match, that is $\intpart{v(z)} = \intpart{v'(z)}$ and
$\intpart{v(z)} \le \max(L_z, U_z)$. Moreover, the delay is $\d <
1$.

Let $0 \le \l_1 < \l_2 < \cdots < \l_k < 1$ be the fractional parts of
clocks in $v$. Let us denote by $X_i$ the set of clocks $z$ that have
 $\fracpart{v(z)} = \l_i$. Similarly, let $0 \le \l'_1 < \l'_2 <
\cdots < \l'_{k'} < 1$ denote the fractional parts in $v'$ and we
define the set $X'_i$ to be the set of clocks $z$ such that
$\fracpart{v'(z)} = \l'_i$. This is pictorially illustrated below.

\vspace{0.5cm}

\begin{center}
  \begin{tikzpicture}

    \draw (0,0) -- (3,0); \fill (0,0) circle (1pt); \fill (0.5, 0)
    circle (1pt); \fill (1.2, 0) circle (1pt); \fill (2.5, 0) circle
    (1pt); \fill (3,0) circle (1pt);

    \node at (0, -0.2) {\scriptsize $0$}; \node at (0.5, -0.2)
    {\scriptsize $X_1$}; \node at (1.2, -0.2) {\scriptsize $X_2$};
    \node at (2, -0.2) {$\dots$}; \node at (2.5, -0.2) {\scriptsize
      $X_k$}; \node at (3, -0.2) {\scriptsize $1$};

    \node at (1.5, -1) {\scriptsize In $v$};

    \draw (5,0 )-- (8,0);
    \begin{scope}[xshift=5cm]
      \fill (0,0) circle (1pt); \fill (0.5, 0) circle (1pt); \fill
      (1.2, 0) circle (1pt); \fill (2.5, 0) circle (1pt); \fill (3,0)
      circle (1pt);

      \node at (0, -0.2) {\scriptsize $0$}; \node at (0.5, -0.2)
      {\scriptsize $X'_1$}; \node at (1.2, -0.2) {\scriptsize $X'_2$};
      \node at (2, -0.2) {$\dots$}; \node at (2.5, -0.2) {\scriptsize
        $X'_{k'}$}; \node at (3, -0.2) {\scriptsize $1$}; \node at
      (1.5, -1) {\scriptsize In $v'$};

    \end{scope}

  \end{tikzpicture}
\end{center}

\vspace{0.3cm}

After a delay $\d$ from $v$, some of the clocks cross the next
integer, whereas some of them do not. Let us say that clocks in $X_j
\cup \dots \cup X_k$ have crossed the integer. Now the fractional
parts of these clocks would be smaller than those of $X_1 \cup \dots
\cup X_{j-1}$ as shown below:

\begin{center}
  \begin{tikzpicture}[scale = 1.3]

    \draw (0,0) -- (3,0); \fill (0,0) circle (1pt); \fill (0.3, 0)
    circle (1pt); \fill (1.1, 0) circle (1pt); \fill (1.6, 0) circle
    (1pt); \fill (2.5, 0) circle (1pt); \fill (3,0) circle (1pt);

    \node at (0, -0.2) {\scriptsize $0$}; \node at (0.3, -0.2)
    {\scriptsize $X_j$}; \node at (0.7, -0.2) {$\dots$}; \node at
    (1.1, -0.2) {\scriptsize $X_{k}$}; \node at (1.6, -0.2)
    {\scriptsize $X_1$}; \node at (2, -0.2) {$\dots$}; \node at (2.5,
    -0.2) {\scriptsize $X_{j-1}$}; \node at (3, -0.2) {\scriptsize
      $1$};

    \node at (1.5, -0.7) {\scriptsize In $v + \d$};
  \end{tikzpicture}
\end{center}

We need to choose a value $\d'$ so that for all clocks $y \in X_j \cup
\cdots \cup X_k$ such that $(v + \d)(y) \le L_y$, the time elapse $\d'$
takes $v'$ to the next integer. Correspondingly for all the clocks $x
\in X_1 \cup \cdots \cup X_{j-1}$ such that $(v + \d)(x) \le U_x$, the
time elapse $\d'$ still keeps $v'$ within the same integer. Clearly we
need this property to be satisfied so that $v'+\d'$ satisfies the same
LU-guards as $v + \d$. To this regard, we define the following two
values:
\begin{align*}
  l = \min & \{~\fracpart{v'(y)}~|~ (v+ \d)(y) \le L_y~\text{and}~y \in
  X_j
  \cup \cdots \cup X_k\} \\
  u = \max & \{~\fracpart{v'(x)}~|~ (v+\d)(x) \le U_x~\text{and}~x \in
  X_1 \cup \cdots \cup X_{j-1}\}
\end{align*}

Firstly note that $u < l$. If not, there exist clocks $y$ and $x$ such
that $v(y) \le L_y$ and $v(x) \le U_x$, for which the order property
is not true, thus giving a contradiction. Let
$\bar{\d}$ be a value between $u$ and $l$. Set $\d' = 1 -
\bar{\d}$. By construction, $v' + \d'$ satisfies the same LU-guards as
$v + \d$.

We will now see that this choice of $\d'$ also satisfies order property
for $v'+\d'$ and $v+\d$. Due to $\d'$ some clocks in $v'$ would have
crossed the next integer. Let us say that clocks in $X_{j'} \cup
\cdots \cup X_{k'}$ have crossed and the others stay within the same
integer. We pictorially depict the scenario with the two valuations $v
+ \d$ and $v' + \d'$ below.

\vspace{0.5cm}

\begin{center}
  \begin{tikzpicture}[scale = 1.3]

    \draw (0,0) -- (3,0); \fill (0,0) circle (1pt); \fill (0.3, 0)
    circle (1pt); \fill (1.1, 0) circle (1pt); \fill (1.6, 0) circle
    (1pt); \fill (2.5, 0) circle (1pt); \fill (3,0) circle (1pt);

    \node at (0, -0.2) {\scriptsize $0$}; \node at (0.3, -0.2)
    {\scriptsize $X_j$}; \node at (0.7, -0.2) {$\dots$}; \node at
    (1.1, -0.2) {\scriptsize $X_{k}$}; \node at (1.6, -0.2)
    {\scriptsize $X_1$}; \node at (2, -0.2) {$\dots$}; \node at (2.5,
    -0.2) {\scriptsize $X_{j-1}$}; \node at (3, -0.2) {\scriptsize
      $1$};

    \node at (1.5, -0.7) {\scriptsize In $v + \d$};

\begin{scope}[xshift=4.5cm]
  \draw (0,0) -- (3,0); \fill (0,0) circle (1pt); \fill (0.3, 0)
  circle (1pt); \fill (1.1, 0) circle (1pt); \fill (1.6, 0) circle
  (1pt); \fill (2.5, 0) circle (1pt); \fill (3,0) circle (1pt);

  \node at (0, -0.2) {\scriptsize $0$}; \node at (0.3, -0.2)
  {\scriptsize $X'_{j'}$}; \node at (0.7, -0.2) {$\dots$}; \node at
  (1.1, -0.2) {\scriptsize $X'_{k'}$}; \node at (1.6, -0.2)
  {\scriptsize $X'_1$}; \node at (2, -0.2) {$\dots$}; \node at (2.5,
  -0.2) {\scriptsize $X'_{j'-1}$}; \node at (3, -0.2) {\scriptsize
    $1$};

  \node at (1.5, -0.7) {\scriptsize In $v' + \d'$};

\end{scope}

\end{tikzpicture}
\end{center}

Pick two clocks $x,y$ such that:
\begin{align}\label{eq:4}
  \intpart{(v + \d)(x)} = \intpart{(v'+\d')(x)} & \text{ and }
  \intpart{(v+ \d)(y)} = \intpart{(v'+\d')(y)} \\
  (v+\d)(x) \le U_x & \text{ and } (v+\d)(y) \le L_y \nonumber \\
  \fracpart{(v+\d)(x)} & < \fracpart{(v+\d)(y)} \nonumber
\end{align}

Consider the case when both $x,y \in X_1 \cup \cdots \cup X_{j-1}$. As
they have not crossed integer in $v$, they should not have crossed
integer in $v'$ too because of (\ref{eq:4}). Therefore both $x,y \in
X'_1 \cup \cdots \cup X'_{j'-1}$. We know from order property that
$\fracpart{v'(x)} < \fracpart{v'(y)}$. Clearly the time elapse of
$\d'$ has not changed this ordering for these clocks and hence
$\fracpart{(v' + \d')(x)} < \fracpart{(v'+\d')(y)}$. We can prove
similarly when both $x, y \in X_{j} \cup \cdots \cup X_k$.

Let us now consider the case when $y \in X_1 \cup \cdots \cup X_{j-1}$
and $x \in X_j \cup \cdots \cup X_k$. As $x$ has crossed integer in
$v$, it should have crossed integer in $v'$ too by the hypothesis
(\ref{eq:4}). Therefore $x \in X'_{j'} \cup \cdots \cup
X'_{k'}$. Again by hypothesis (\ref{eq:4}) the clock $y$ should not
have crossed integer and hence $y \in X'_{1} \cup \cdots \cup
X'_{j'-1}$. Hence we get that $\fracpart{(v'+\d')(x)} <
\fracpart{(v'+\d')(y)}$.

This way we have proved the order property for $v+\d$ and
$v'+\d'$ for the case of the strict inequality. The case of the
equality can be handled in a similar way.
\end{proof}

\subsection{Finite paths characterizing LU-simulation}
\label{sec:finite-sequ-char}

The previous section took a digression to define the notion of
LU-regions. Now, we are in a position to answer the question: given
two valuations $v, v'$, when is $v \ftalu v'$. This section is devoted
to show the link between this question and the definition of
LU-regions. For valuations $v,v'$, we will show that $v \ftalu v'$ if
and only if $v'$ can elapse some amount of time and fall into the
LU-region of $v$ (Proposition~\ref{prop:rLU-sim}).  Before that, we
will define a sequence of guards that succinctly describes the
LU-region $\lureg{v}$.

\begin{definition}[LU-sequence]
  \label{def:lu-sequence}
  For a valuation $v$, let $g_{int}$ be the conjunction of all $LU$
  guards that $v$ satisfies. For every pair of clocks $x,y$ such that
  $v(x) \leq U_x$, $v(y) \leq L_y$, consider guards:
  \begin{itemize}
  \item if $0 < \fracpart{v(x)} < \fracpart{v(y)}$ then we take a
    guard $g_{xy}\equiv (x < \intpart{v(x)} + 1) \land (y >
    \intpart{v(y)} + 1)$.
  \item if $0 < \fracpart{v(x)} = \fracpart{v(y)}$ then we take a
    guard $g_{xy}\equiv (x \leq \intpart{v(x)} + 1) \land (y \geq
    \intpart{v(y)} + 1)$.
  \end{itemize}
  For every $y$ with $v(y) \le L_y$ put $g_y=\Land\set{g_{xy} : v(x)
    \leq U_x}$.  Consider all the clocks $y$ with $v(y) \leq L_y$ and
  suppose that $y_1,\dots,y_k$ is the ordering of these clocks with
  respect to the value of their fractional parts:
  $0\le\fracpart{v(y_1)}\leq\dots\leq\fracpart{v(y_k)}$.  The
  LU-sequence $\mathit{seq}(v)$ is defined to be the sequence of
  transitions $\act{g_{int}}\ \act{g_{y_k}}\ \dots\ \act{g_{y_1}}$
\end{definition}

\begin{proposition}
  \label{prop:rLU-sim}
  For every two valuations $v$ and $v'$:
  \begin{equation*}
    v\ftalu v'\quad\text{iff}\quad \text{there is $\d'\in\Rpos$ with
      $v'+\d'\in \rrlu{v}$} .
  \end{equation*}
\end{proposition}
\begin{proof}
  First let us take $v$ and consider its LU-sequence $\seq(v)$.  The
  sequence $\seq(v)$ can be performed from $v$ (the symbol $\tau$
  denotes a time elapse):
  \begin{multline*}
    v \xra{g_{int}} v \xra{\tau} v+\d_k \xra{g_{y_k}} v+\d_k
    \xra{\tau} v+\d_{k-1} \xra{g_{y_{k-1}}} \dots \\
    \dots \xra{\tau} v+\d_1 \xra{g_{y_1}} v+\d_1
  \end{multline*}
  when choosing $\d_i = (1 - \fracpart{v(y_i)})$ or $\d_i = (1 -
  \fracpart{v(y_i)}) + \varepsilon$ for some sufficiently small
  $\varepsilon > 0$; depending on whether we test for non-strict or
  strict inequality in $g_{y_i}$. Delay $\d_i$ makes the value of
  $y_i$ integer or just above integer.

  If $v \ftalu v'$, then there exists a $\d'$ such that $v' + \d'$ can
  do the sequence of transitions given by $\mathit{seq}(v)$. The guard
  $g_{int}$ ensures that $v'+\d'$ satisfies the same $LU$-guards as
  $v$. Note that in particular, this entails that for every pair of
  clocks $x,y$ such that $v(x) \le U_x$, $v(y) \le L_y$ and
  $\fracpart{v(x)} > 0$, we have:
  \begin{itemize}
  \item $\intpart{(v'+\d')(x)} < \intpart{v(x)} + 1$, and
  \item $\intpart{(v'+\d')(y)} \ge \intpart{v(y)}$.
  \end{itemize}
  Following transition $g_{int}$, valuation $v'+\d'$ can satisfy
  guards $g_{y_k}$ to $g_{y_1}$ by letting some time elapse:
  \begin{multline*}
    v'+\d' \xra{g_{int}} v'+\d' \xra{\tau} v+\d'_k \xra{g_{y_k}}
    v+\d'_k
    \xra{\tau} v+\d'_{k-1} \xra{g_{y_{k-1}}} \dots \\
    \dots \xra{\tau} v'+\d_1 \xra{g_{y_1}} v'+\d'_1
  \end{multline*}
  
  Consider the clock $y_i$. If the integral part
  $\intpart{(v'+\d')(y_i)}$ is strictly greater than $\intpart{v(y_i)}$,
  time elapse is not necessary to cross the guard $g_{y_i}$. On the
  other hand, if $\intpart{(v'+\d')(y_i)} = \intpart{v(y_i)}$, then for
  the guard $g_{y_i}$ to be crossed, $\d'_i$ should be sufficiently
  large to make the value of $(v' + \d'_i)(y_i)$ integer or just above
  integer. But at the same time, the guard $g_{xy_i}$ is satisfied,
  which entails that for all $x$ such that $v(x) \le U_x$,
  $\intpart{v(x)} = \intpart{(v'+\d')(x)}$, we get:
  \begin{itemize}
  \item if $0 < \fracpart{v(x)} < \fracpart{v(y_i)}$, then
    $\fracpart{(v'+\d')(x)} < \fracpart{(v'+\d')(y_i)}$ and
  \item if $0 < \fracpart{v(x)} = \fracpart{v(y_i)}$, then
    $\fracpart{(v'+\d')(x)} \le \fracpart{(v'+\d')(y_i)}$
  \end{itemize}
  Therefore, from the definition of $LU$-regions, we get that $v' +
  \d' \in \rrlu{v}$.  This shows left to right implication.

  For the right to left implication we show that the relation
  $S=\set{(v,v') : v'\in \rrlu{v}}$ is an LU-simulation relation. For
  this we take any $(v,v')\in S$, any $LU$ guard $g$, and any reset
  $R$ such that $v\act{g,R} v_1$. We show that $v'\act{g,R} v'_1$ for
  some $v'_1$ with $(v_1,v'_1)\in S$. The only non-trivial part in
  this is to show that if $v + \d \sat g$ for some $\d$, then there
  exists a $\d'$ such that $(v+\d, v' + \d') \in S$ and $v' + \d' \sat
  g$. But this is exactly given by Lemma~\ref{lem:lu-reg-time}.
\end{proof}

In particular the proof shows the following.
\begin{corollary}\label{cor:sequence}
  For two valuations $v$, $v'$:
  \begin{equation*}
    \text{$v \ftalu v'$ iff $v'$ can execute the sequence
      $\seq(v)$.}
  \end{equation*}
\end{corollary}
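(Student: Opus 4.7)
The corollary essentially packages an observation that was already implicit in the proof of Lemma~\ref{lem:rLU-sim}. The plan is to establish the two directions separately, each by directly invoking facts from that proof.

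For the forward direction, assume $v \ftalu v'$. The proof of Lemma~\ref{lem:rLU-sim} exhibits a concrete choice of delays $\d_k, \d_{k-1}, \dots, \d_1$ witnessing that $v$ itself executes $\seq(v)$. Since all guards $g_{int}, g_{y_k}, \dots, g_{y_1}$ are $LU$-guards and the resets are empty, I can apply the definition of $\ftalu$ inductively along the sequence: each step $(q, v) \act{g_{y_i}, \emptyset} (q, v + \d_i)$ is matched by some step $v' \act{g_{y_i}, \emptyset} v'_1$ with $v + \d_i \ftalu v'_1$, and I can carry $\ftalu$ forward to the next transition. Thus $v'$ executes $\seq(v)$.

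For the reverse direction, suppose $v'$ executes $\seq(v)$. The proof of Lemma~\ref{lem:rLU-sim} already notes (``it is also easy to check\dots'') that the guards $g_{int}$ and $g_{y_i}$ were chosen precisely so that executing this sequence from a valuation forces, after some initial delay $\d'$, the integer parts and the ordering of fractional parts of the relevant clocks to agree with $v$ up to the $LU$-bounds; that is, $v' + \d' \in r_{LU}(v)$. Applying Lemma~\ref{lem:rLU-sim} again yields $v \ftalu v'$.

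There is no real obstacle here beyond bookkeeping: the construction of $\seq(v)$ was designed so that its executability is equivalent to reaching $r_{LU}(v)$ by a delay, and Lemma~\ref{lem:rLU-sim} already equates membership in $r_{LU}(v)$ (up to delay) with the simulation $\ftalu$. The only care needed is to check that the inductive argument in the forward direction survives the fact that the intermediate valuations between $\act{g_{y_i}}$-steps are reached by $\tau$ (delay) rather than by a reset action, but this is immediate since $\ftalu$ is a time-abstract simulation and delays are quantified existentially in its definition.
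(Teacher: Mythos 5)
Your proof is correct and follows essentially the same route as the paper: the paper derives this corollary directly from the proof of Lemma~\ref{lem:rLU-sim}, using exactly the two facts you cite (that $\seq(v)$ is executable from $v$ and propagates along $\ftalu$ since its guards are LU-guards with empty resets, and that executability of $\seq(v)$ from $v'$ forces $v'+\d'\in r_{LU}(v)$, which the lemma equates with $v\ftalu v'$). No substantive difference.
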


\subsection{Proof of optimality}
\label{sec:proof-optimality}

We are now ready to 
show that $\abslu(Z)$ (Definition \ref{def:abslu}), the abstraction
based on $\ftalu$ simulation, is the coarsest
sound and complete abstraction that uses solely the $LU$ information.

\begin{theorem}
  \label{thm:alu-biggest}
  The $\abslu$ abstraction is the coarsest abstraction that is sound
  and complete for all LU-automata. It is also finite.
\end{theorem}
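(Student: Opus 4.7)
The plan is to establish the two halves of the theorem separately: first, that $\abslu$ itself is sound and complete for every LU-automaton, and second, that any other sound and complete abstraction $\abstr$ is pointwise dominated by $\abslu$ on every zone $Z$.

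For the first half, completeness is immediate from the defining property $Z \subseteq \abslu(Z)$, so that every concrete execution from $(q_0, v_0)$ is matched step by step by an abstract execution carrying the same valuations inside a larger set. Soundness follows because $\ftalu$ is a time-abstract simulation for every LU-automaton, and abstractions based on time-abstract simulations are transition compatible by the lemma already proved; a straightforward induction on the length of the abstract run produces the required concrete witness.

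For the maximality half, we argue by contradiction. Assume $\abstr$ is sound and complete for all LU-automata and that $v \in \abstr(Z) \setminus \abslu(Z)$ for some zone $Z$. By definition of $\abslu$ no $v' \in Z$ satisfies $v \ftalu v'$, so Corollary~\ref{cor:sequence} tells us that $v$ can execute the sequence $\seq(v)$ while no $v' \in Z$ can. We use this separating sequence to design an LU-automaton $\Aa$ on which $\abstr$ violates soundness. The automaton has two segments: a \emph{preparation} segment that drives the concrete zone graph from $(q_0, \{\mathbf{0}\})$ to a configuration $(q, Z)$, followed by a \emph{test} segment consisting of a linear chain of states $q \to q_1 \to \cdots \to q_{acc}$ whose guards are exactly those of $\seq(v)$. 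All test-phase guards respect the given $L, U$ bounds, because by its construction in the proof of Lemma~\ref{lem:rLU-sim} the sequence $\seq(v)$ only uses $LU$-consistent guards.

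On the abstract side, the run reaches $(q, \abstr(Z))$ and this set contains $v$; since $v$ can execute $\seq(v)$, the whole test segment fires and $q_{acc}$ is abstractly reachable. On the concrete side, the configuration at $q$ is exactly $Z$, and since no element of $Z$ can execute $\seq(v)$, the state $q_{acc}$ is not concretely reachable, contradicting soundness of $\abstr$ for $\Aa$. The principal obstacle is the preparation segment: for an arbitrary zone $Z$ we must exhibit an LU-automaton whose concrete zone graph actually reaches $(q, Z)$, and general zones are not directly obtainable from $\{\mathbf{0}\}$. The remedy is to introduce auxiliary clocks together with a short sequence of integer guards and resets tailored to the defining DBM of $Z$, which always suffices since zones are finitely describable.
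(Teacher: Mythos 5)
Your first half (checking that $\abslu$ is itself sound and complete) is fine, and your test-segment gadget built from $\seq(v)$ via Corollary~\ref{cor:sequence} is exactly the paper's construction. The genuine gap is in the maximality half, specifically in the ``preparation segment''. You start from an \emph{arbitrary} zone $Z$ with $v \in \abstr(Z)\setminus\abslu(Z)$ and then claim you can build an LU-automaton whose concrete zone graph reaches the configuration $(q,Z)$ exactly, using auxiliary clocks and integer guards and resets. This claim is false as stated: forward exploration from $(q_0,\set{\vali})$ only produces zones with a very constrained diagonal structure. Along any fixed symbolic path the clocks were last reset in a fixed order, so for every pair $x,y$ the reachable zone satisfies $x\ge y$ throughout or $x\le y$ throughout; a zone such as $(0\le x\le 1)\wedge(0\le y\le 1)$ with no diagonal constraint can never appear as a reachable configuration. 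Adding auxiliary clocks does not repair this, because it changes the clock set $X$ and hence the domain $\Pp(\Rpos^{|X|})$ on which $\abstr$ is defined, so what you reach is a zone over a larger clock set, not $Z$ itself.

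The underlying problem is a mis-quantification of the hypothesis. ``Biggest sound and complete abstraction'' can only sensibly refer to the behaviour of $\abstr$ on zones that actually arise during exploration: an abstraction could exceed $\abslu$ on unreachable zones without ever compromising soundness, so pointwise domination on \emph{all} zones is both too strong to prove and not what the theorem needs. The paper therefore assumes that $\abstr(Z)\setminus\abslu(Z)\neq\emptyset$ for some $Z$ such that $(q_1,Z)$ is a \emph{reachable configuration} of some LU-automaton $\Aa_1$, and then the preparation segment comes for free: one simply grafts the $\seq(v)$ chain onto $\Aa_1$ at $q_1$ (checking, as you do, that these guards respect the given $L,U$ bounds, and noting that adding outgoing transitions does not disturb the reachability of $(q_1,Z)$). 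If you restate your hypothesis this way and delete the preparation segment, the rest of your argument goes through and coincides with the paper's proof.
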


\begin{proof}
  Suppose that we have some other abstraction $\aaa'$ that is not
  included in $\abslu$ on at least one $LU$-automaton. This means that
  there is some $LU$ automaton $\Aa_1$ and its reachable configuration
  $(q_1,Z)$ such that $\aaa'(Z)\setminus\abslu(Z)$ is not empty. We
  suppose that $\aaa'$ is complete and show that it is not sound.
  
  Take $v\in \aaa'(Z)\setminus\abslu(Z)$. Consider the test sequence
  $\seq(v)$ as in Corollary~\ref{cor:sequence}. From this corollary we
  know that it is possible to execute this sequence from $v$ but it is
  not possible to do it from any valuation in $Z$ since otherwise we
  would get $v\in \abslu(Z)$.

  As illustrated in Fig~\ref{fig:a1-seq} we add to $\Aa_1$ a new
  sequence of transitions constructed from the sequence $\seq(v)$.  We
  start this sequence from $q_1$, and let $q_f$ be the final state of
  this new sequence. The modified automaton $\Aa_1$ started in the
  initial configuration arrives with $(q_1,Z)$ in $q_1$ and then it
  can try to execute the sequence we have added. From what we have
  observed above, it will not manage to reach $q_f$. On the other hand
  from $(q_1,v)$ it will manage to complete the sequence. But then by
  completeness of the abstraction $(q_1,\aaa'(Z))\act{\seq(v)}
  (q_f,W)$ for a nonempty $W$. So $\aaa'$ is not a sound abstraction.

  That $\abslu$ is finite is easy to see. The set $\abslu(Z)$ is a
  union of classical regions. Recall that we denote by $M$ the bound function
  that assigns to each clock $x$, the maximum of $L_x$ and $U_x$. Let
  $v'$ be a valuation in $Z$. If $v' \in \reg{v}$ then
  it is easy to see that $v' \in \rrlu{v}$ and by definition $v' \in
  \abslu(Z)$.
\end{proof}

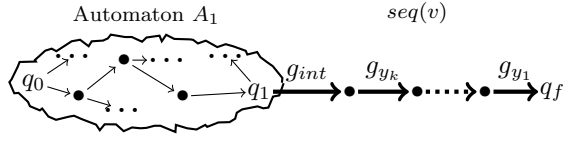
\begin{figure}[!t]
  \centering {
  \begin{tikzpicture}[scale=0.6]
    \path[draw,thick,decorate,
    decoration={random steps,segment length=3pt,amplitude=2pt}] (0,0)
    ellipse (3cm and 1cm);
    \begin{scope}[inner sep=0.5pt]
      \node (q0) at (-2.5,0) {$q_0$};
      \node (q0s1) at (-1.6,0.6) {\bfseries\dots};
      \node (q0s2) at (-1.5,-0.3) {$\bullet$};
      \node (q0s2s1) at (-0.5,-0.6) {\bfseries\dots};
      \node (q0s2s2) at (-0.5,0.5) {$\bullet$};
      \node (q0s2s2s1) at (0.5,0.5) {\bfseries\dots};
      \node (q0s2s2s2) at (0.8,-0.3) {$\bullet$};
      \node (q1) at (2.5,-0.2) {$q_1$};
      \node (q1s1) at (1.8,0.6) {\bfseries\dots};
    \end{scope}
    \begin{scope}[->]
      \draw(q0) -- (q0s1);
      \draw(q0) -- (q0s2);
      \draw(q0s2) -- (q0s2s1);
      \draw(q0s2) -- (q0s2s2);
      \draw(q0s2s2) -- (q0s2s2s1);
      \draw(q0s2s2) -- (q0s2s2s2);
      \draw(q0s2s2s2) -- (q1);
      \draw (q1) -- (q1s1);
    \end{scope}
    \node[above] (A1) at (0,1.1) {\footnotesize Automaton $A_1$};
    \begin{scope}[inner sep=0.5pt]
      \node (gint) at (4.5,-0.2) {$\bullet$};
      \node (qyk) at (6,-0.2) {$\bullet$};
      \node (qy1) at (7.5,-0.2) {$\bullet$};
      \node (qf) at (9,-0.2) {$q_f$};
    \end{scope}
    \begin{scope}[->,ultra thick]
      \draw (q1) -- node[above] {$g_{int}$} (gint);
      \draw (gint) -- node[above] {$g_{y_k}$} (qyk);
      \draw[dotted] (qyk) -- (qy1);
      \draw (qy1) -- node[above] {$g_{y_1}$} (qf);
    \end{scope}
    \node[above] (seqv) at (6, 1.1) {\footnotesize $seq(v)$};
  \end{tikzpicture}
}

  \caption{Adding the sequence $\seq(v)$ to $\Aa_1$.}
  \label{fig:a1-seq}
\end{figure}

 \section{The $\alu$ abstraction}
\label{sec:abstr-alu-coinc}

Since $\abslu$ is the coarsest abstraction, we would like to use it in
a reachability algorithm. The definition of $\abslu$, or even the
characterization referring to LU-regions, are still too complicated to
work with. It turns out though that there is a close link to an
existing abstraction.

The $\alu$ abstraction proposed by Behrmann et al. in
\cite{Behrmann:STTT:2006} has a much simpler definition. Quite
surprisingly, in the context of reachability analysis the two
abstractions coincide (Theorem~\ref{thm:alu_biggest}).
The $\alu$ abstraction is based on a
simulation relation $\lu$. 

\begin{definition}[LU-preorder~\cite{Behrmann:STTT:2006}]
  \label{def:lu-preorder}
  Let $L,U:X\to\Nat \cup \{-\infty\}$ be two bound functions. For a
  pair of valuations we set $v\lu v'$ if for every clock $x$:
  \begin{itemize}
  \item if $v'(x)<v(x)$ then $v'(x)> L_x$, and
  \item if $v'(x)>v(x)$ then $v(x)>U_x$.
  \end{itemize}
\end{definition}

It has been shown in~\cite{Behrmann:STTT:2006} that $\lu$ is a
simulation relation.  The $\alu$ abstraction is based on this
relation.

\begin{definition}[$\alu$-abstraction~\cite{Behrmann:STTT:2006}]
  Given $L$ and $U$ bound functions, for a set of valuations $W$ we
  define:
  \begin{equation*}
    \alu(W)=\set{v ~|~ \exists v'\in W.\  v\lu v'}.
  \end{equation*}
\end{definition}

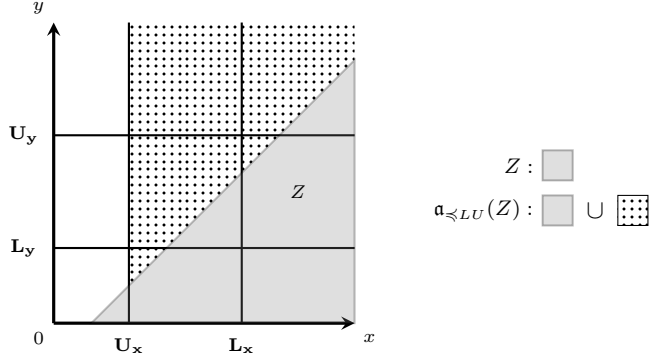
\begin{figure}[!t]
  \centering \begin{tikzpicture}

    
  \begin{scope}[xshift = -1cm]
    \draw[thick] (2,0) -- (2,4); \draw (2, -0.3) node {\scriptsize
      $\mathbf{U_x}$} (2.5, -0.4); \draw[thick] (3.5,0) --
    (3.5,4); \draw (3.5, -0.3) node {\scriptsize
     $\mathbf{L_x}$} (4, -0.4);
  \end{scope}
  
  \begin{scope}[yshift=-0.5cm]
    \draw[thick] (0,1.5) -- (4,1.5); \draw (-0.4, 1.5) node
    {\scriptsize $\mathbf{L_y}$} (0,1.5); \draw[thick]
    (0,3) -- (4,3); \draw (-0.4, 3) node {\scriptsize
      $\mathbf{U_y}$} (0, 3);
  \end{scope}

  \draw[->, very thick, >=stealth] (0,0) -- (0,4); \draw[->, very
  thick, >=stealth] (0,0) -- (4,0); \draw
  (-0.2,-0.2) node {\scriptsize $0$} (0, -0.4); \draw (4.2, -0.2) node
  {\scriptsize $x$} (4.2, -0.2); \draw (-0.2, 4.2) node {\scriptsize
    $y$} (0,4.2);

  \draw[thick, fill=gray, nearly transparent] (0.5,0)
    -- (4,3.5) -- (4,0) -- cycle;
  \draw (3.25, 1.75) node {\scriptsize $Z$ } (3.5, 1.75); 

 \fill[pattern = dots] (1,0.5) -- (1,4) -- (4,4) -- (4,3.5) -- cycle; 



 \begin{scope}[xshift = -3cm, yshift=-1cm]
    \node[left] at (9.5, 3.1) {\footnotesize $Z :$};
    \node[left] at (9.5, 2.5) {\footnotesize $\alu(Z):$};

    \draw[thick, fill=gray, nearly transparent] (9.5,2.9) rectangle
   (9.9, 3.3);
    \draw[thick, fill=gray, nearly transparent] (9.5,2.3) rectangle
  (9.9, 2.7);
   
   \node at (10.2, 2.5) {$\cup$};
   \draw[fill, pattern = dots] (10.5,2.3) rectangle
   (10.9, 2.7);
\end{scope}

\end{tikzpicture}

  \caption{Zone $Z$ is given by the grey area. Abstraction $\alu(Z)$
    is given by the grey area along with the dotted area}
  \label{fig:alu-example}
\end{figure}

Figure~\ref{fig:alu-example} gives an example of a zone $Z$ and its
abstraction $\alu(Z)$. It can be seen that $\alu(Z)$ is not a convex
set.

\subsection{Abstractions $\abslu$ and $\alu$ coincide}

Our goal is to show that when we consider zones closed under
time-successors, $\alu$ and $\abslu$ coincide. To prove this, we would
first show that there is a very close connection between valuations in
$\rrlu{v}$ and valuations that simulate $v$ with respect to $\lu$.  The
following lemma says that if $v' \in \rrlu{v}$ then by slightly
adjusting the fractional parts of $v'$ we can get a valuation $v'_1$
such that $v \lu v'_1$. We start with a preliminary definition.

\begin{definition}
  A valuation $v_1$ is said to be in the \emph{neighborhood} of $v$,
  written $v_1 \in \nbd{v}$ if for all clocks $x,y$:
  \begin{itemize}
  \item $\intpart{v(x)} = \intpart{v_1(x)}$,
  \item $\fracpart{v(x)} = 0$ iff $\fracpart{v_1(x)} = 0$,
  \item $\fracpart{v(x)} \lleq \fracpart{v(y)}$ implies
    $\fracpart{v_1(x)} \lleq \fracpart{v_1(y)}$ where $\lleq$ is either
    $<$ or $=$.
  \end{itemize}
\end{definition}

Notice that the neighborhood of $v$ is the same as the region of $v$
with respect to the classical region definition
with maximal bound being $\infty$.

We give a brief intuition before proving the following lemma which
gives the relation between LU-regions and $\lu$ simulation. Consider a
valuation $v$ shown in Figure~\ref{fig:adjustment}. Its LU-region
$\lureg{v}$ is given by the shaded portion. Pick a valuation $v'$ that
belongs to $\lureg{v}$ and let us see if it satisfies $v \lu v'$.

As we can see in Figure~\ref{fig:adjustment}, the value of $v'(x) >
v(x)$ and additionally $v'(x) \le U_x$. This shows that $v \not \lu
v'$ due to its $x$-coordinate. However, by slightly adjusting the
fractional parts: that is, reducing $\fracpart{v'(y)}$ to move it down
a bit and then reducing $\fracpart{v'(x)}$ to make $v'(x)$ equal to
$v(x)$ leads us to a valuation $v_1'$ which is in the neighborhood of
$v'$ but now $v \lu v_1'$. The adjustment is depicted in
Figure~\ref{fig:adjustment}.  Essentially, the following lemma claims
that the LU-region $\lureg{v}$ can be obtained as the downward closure
of $\lu$ over the set $\nbd{v}$, in other words, $\alu(\nbd{v})$. 

\begin{figure}
 \centering
 \begin{tikzpicture}[scale = 1.5]
  \begin{scope}[xshift=1cm]
    \draw[thick] (2,0) -- (2,4); \draw (2, -0.3) node {\scriptsize
      $\mathbf{L_x}$} (2.5, -0.4); \draw[thick] (3,0) --
    (3,4); \draw (3, -0.3) node {\scriptsize
     $\mathbf{U_x}$} (3.5, -0.4);
  \end{scope}
  
  \begin{scope}[yshift=-0.5cm]
    \draw[thick] (0,1.5) -- (4.5,1.5); \draw (-0.4, 1.5) node
    {\scriptsize $\mathbf{U_y}$} (0,1.5); \draw[thick]
    (0,4) -- (4.5,4); \draw (-0.4, 4) node {\scriptsize
      $\mathbf{L_y}$} (0, 4);
  \end{scope}

  \draw[->, very thick, >=stealth] (0,0) -- (0,4);
  \draw[->, very thick, >=stealth] (0,0) -- (4.5,0); \draw
  (-0.2,-0.2) node {\scriptsize $0$} (0, -0.4); \draw (4.7, -0.2) node
  {\scriptsize $x$} (4.2, -0.2); \draw (-0.2, 4.2) node {\scriptsize
    $y$} (0,4.2);

 \fill[gray, nearly transparent] (1,1) -- (1,4) -- (2,4) -- (2,2) -- cycle; 
  
 \draw (1,1) -- (1,4);
 \draw (2,2) -- (2,4);
 \draw (1,1) -- (2,2);

 \fill (1.5, 1.7) circle (1pt);
 \node[left] at (1.5, 1.6) {\footnotesize $v$};

 \draw [dashed] (1, 2.5) -- (2, 3.5);
 
 \fill (1.5, 2.7) circle (1pt);
 \node[left] at (1.55, 2.65) {\footnotesize $v_1'$};
 
 \fill (1.9, 2.7) circle (1pt);
 \node[right] at (1.95, 3.2) {\footnotesize $v'$};
 \fill (1.9, 3.2) circle (1pt);
 
 \draw [dashed] (1, 2.5) -- (2, 2.5);
 \draw[very thin, ->, >=stealth, shorten >= 1pt] (1.9, 2.7) -- (1.5,
 2.7); 
 \draw[very thin, ->, >=stealth, shorten >= 1pt] (1.9, 3.2) -- (1.9,
 2.7); 
 
 \draw [very thin] (1.5, 0) -- (1.5, 2.7);
 \node at (1.4, -0.2) {\scriptsize $v(x)$};
 
 \draw [very thin] (1.9, 0) -- (1.9, 2.7);
 \node at (2, -0.2) {\scriptsize $v'(x)$};
\end{tikzpicture}

 \caption{Intuition for the adjustment lemma.}
\label{fig:adjustment}
\end{figure}
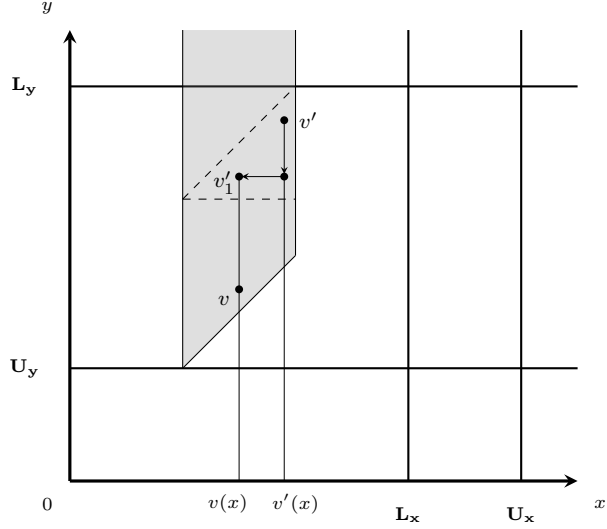

\begin{lemma}[Adjustment]\label{lem:adjustment}
  Let $v$ be a valuation and let $v' \in \rrlu{v}$. Then, there exists
  a $v_1' \in \nbd{v'}$ such that $v \lu v_1'$.
\end{lemma}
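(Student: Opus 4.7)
The plan is to reformulate $v \lu v'_1$ as per-clock constraints on $v'_1$, observe how $\nbd{v'}$ forces a common fractional value on each equivalence class of clocks sharing the same $\fracpart{v'(x)}$, and then use both halves of the definition of $\rlu(v)$ to show that such values can be chosen consistently.

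First I would unfold $\lu$: for each clock $x$, (a) if $v(x) \le U_x$ then $v'_1(x) \le v(x)$, and (b) if $v'_1(x) \le L_x$ then $v'_1(x) \ge v(x)$. Partitioning the clocks according to the position of $v(x)$ relative to $L_x$ and $U_x$ into $C_a$ ($v(x) \le \min(L_x, U_x)$), $C_b$ ($L_x < v(x) \le U_x$), $C_c$ ($U_x < v(x) \le L_x$) and $C_d$ ($v(x) > \max(L_x, U_x)$), the constraints become $v'_1(x) = v(x)$ on $C_a$, $L_x < v'_1(x) \le v(x)$ on $C_b$, $v'_1(x) \ge v(x)$ on $C_c$, and $v'_1(x) > L_x$ on $C_d$. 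The ``same LU-guards'' half of the definition of $\rlu(v)$ gives, for every $x \in C_a \cup C_b \cup C_c$, that $\intpart{v(x)} = \intpart{v'(x)}$ and $\fracpart{v(x)} = 0$ iff $\fracpart{v'(x)} = 0$. After fixing $\intpart{v'_1(x)} := \intpart{v'(x)}$ for all $x$, the constraints on $C_a \cup C_b \cup C_c$ reduce to purely fractional-part statements: $\fracpart{v'_1(x)} = \fracpart{v(x)}$ on $C_a$, $\fracpart{v'_1(x)} \le \fracpart{v(x)}$ on $C_b$, and $\fracpart{v'_1(x)} \ge \fracpart{v(x)}$ on $C_c$; the $C_d$ requirement $v'_1(x) > L_x$ is automatic from $v'(x) > L_x$ (itself from same LU-guards) combined with the $\nbd{v'}$ conditions on integer parts and zero fractional parts.

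Next I would group the clocks by equality of $\fracpart{v'(x)}$ into $G_0 < G_1 < \cdots < G_k$, with $G_0$ the zero group, and note that being in $\nbd{v'}$ forces a single value $f_{G_i} := \fracpart{v'_1(x)}$ for every $x \in G_i$, with $f_{G_0}=0$ and $0 < f_{G_1} < \cdots < f_{G_k} < 1$. The fractional-part constraints translate to: each $f_{G_i}$ must lie in $[a_i, b_i]$, where $a_i := \max_{x \in G_i \cap C_c}\fracpart{v(x)}$ and $b_i := \min_{x \in G_i \cap C_b}\fracpart{v(x)}$, and $f_{G_i}$ is pinned to $\fracpart{v(x)}$ whenever $G_i$ contains a $C_a$ clock. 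Existence of a simultaneous choice of $f_{G_i}$'s respecting both the intervals and the strict monotonicity rests on the ``fractional ordering'' half of the definition of $\rlu(v)$, applied to pairs $(x,y)$ with $x \in C_a \cup C_b$ and $y \in C_a \cup C_c$. Repeatedly applying this condition and its contrapositive establishes $a_i \le b_i$ in each group, that $C_a$ clocks in the same group share $\fracpart{v(\cdot)}$ while across distinct groups their pinned values are strictly monotone in $i$, and that every pinned $f_{G_i}$ is strictly separated from the relevant endpoint of any adjacent non-pinned group's interval. A left-to-right sweep then selects the $f_{G_i}$'s, and setting $\fracpart{v'_1(x)} = f_{G_i}$ for every $x \in G_i$ yields $v'_1 \in \nbd{v'}$ satisfying $v \lu v'_1$.

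The main obstacle will be this final separation check: the ordering condition of $\rlu(v)$ is one-directional (premise on $v$, conclusion on $v'$), and some class combinations (notably pairs in $C_c \times C_b$) give no direct constraint at all. Extracting the strict separations between the pinned and the flexible group values therefore requires carefully invoking contrapositives of both the strict-inequality and the equality clauses of the ordering condition, with a short case analysis on the classes of the two clocks in each relevant pair.
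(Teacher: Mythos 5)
Your proposal is correct and follows essentially the same route as the paper's proof: it reduces the problem to choosing fractional parts for the clocks with matching integer parts and a bound still in force, groups them by the fractional value in $v'$, derives the consistency of the per-group constraints (your $a_i \le b_i$ and the cross-group separations) from the ordering clause of $r_{LU}(v)$ applied to pairs with $v(x)\le U_x$ and $v(y)\le L_y$, and then fixes the values by an ordered sweep --- the paper does exactly this via a downward induction with the invariant that all lower-group lower bounds lie strictly below the last chosen value, which is the mirror image of your left-to-right invariant. The ``separation check'' you flag as the main obstacle is indeed where the work lies, and the instances of the ordering condition you identify (on $(C_a\cup C_b)\times(C_a\cup C_c)$ pairs, together with their contrapositives) are exactly the ones the paper uses and are sufficient to close it.
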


\begin{proof}
  Let $v' \in \rrlu{v}$. The goal is to construct a valuation $v'_1 \in
  \nbd{v'}$ that satisfies $v \lu v'_1$. To be in the neighborhood,
  the valuation $v'_1$ should have the same integral parts as that of
  $v'$ and should agree on the ordering of fractional parts. So for
  all $x$, we put $\intpart{v'_1(x)} = \intpart{v'(x)}$. It remains to
  choose the fractional parts for $v'_1$. But before, we will first
  see that there are clocks for which irrespective of what the
  fractional part is, the two conditions in Definition
  \ref{def:lu-preorder} would be true.

  Consider a clock $x$ that has $\intpart{v'(x)} <
  \intpart{v(x)}$. Since $v'$ satisfies all LU-guards as $v$, we
  should have $v'(x) > L_x$. The first condition of $\lu$ for $x$
  becomes true and the second condition is vacuously true.  Similarly,
  when $\intpart{v'(x)} > \intpart{v(x)}$, we should have $v(x) > U_x$
  and the second condition of $\lu$ becomes true and the first
  condition is vacuously true. Therefore, clocks $x$ that do not have
  the same integral part in $v$ and $v'$ satisfy the $\lu$ condition
  directly thanks to the different integral parts. Whatever the
  fractional parts of $v'_1$ are, the $\lu$ condition for these clocks
  would still be true.

  Let us therefore now consider only the clocks that have the same
  integral parts: $\intpart{v'(x)} = \intpart{v(x)}$. If this integer
  is strictly greater than both $L_x$ and $U_x$, the two conditions of
  $\lu$ would clearly be satisfied, again irrespective of the
  fractional parts. So we consider only the clocks $x$ that have the
  same integral part in both $v$ and $v'$ and additionally either
  $\intpart{v(x)} \le U_x$ or $\intpart{v(x)} \le L_x$.

  We prune further from among these clocks. Suppose there is such a
  clock that has $\fracpart{v'(x)} = 0$. To be in the neighborhood,
  we need to set $\fracpart{v'_1(x)} = 0$. If $\fracpart{v(x)}$ is $0$
  too, we are done as the $\lu$ condition becomes vacuously
  true. Otherwise, we would have $v'(x) = v'_1(x) < v(x)$. But recall
  that $v' \in \rrlu{v}$ and so it satisfies the same LU-guards as $v$
  does. This entails that $v'_1(x) > L_x$ and we get the first
  condition of $\lu$ to be true. Once again, the other condition is
  trivial. So we eliminate clocks that have zero fractional parts in
  $v'$. A similar argument can be used to eliminate clocks that have
  zero fractional parts in $v$.

  So finally, we end up with the set of clocks $x$ that have:
  \begin{itemize}
  \item $\intpart{v'(x)} = \intpart{v(x)}$,
  \item $\fracpart{v'(x)} > 0$ and $\fracpart{v(x)} > 0$,
  \item $v(x) < \max(U_x, L_x)$.
  \end{itemize}
  
  Call this set $X_{f}$.  The task is to select non-zero fractional
  values $\fracpart{v'_1(x)}$ for all clocks $x$ in $X_f$ so that they
  match with the order in $v'$. This is the main challenge and this is
  where we would be using the second property in the definition of $v'
  \in \rrlu{v}$, which we restate here:
  \begin{align}\label{eqn:rlu-prop2}
    & \text{$\forall x, y \in X_f$ such that $v(x) \le U_x$ and $v(y)
      \le
      L_y$}  \\
    & \qquad 0 < \fracpart{v(x)} < \fracpart{v(y)} \imp
    \fracpart{v'(x)} < \fracpart{v'(y)} \nonumber \\
    & \qquad 0 < \fracpart{v(x)} = \fracpart{v(y)} \imp
    \fracpart{v'(x)} \le \fracpart{v'(y)} \nonumber
  \end{align}

  Let $0 < \l_1' < \l_2' < \dots < \l_n' < 1$ be the fractional values
  taken by clocks of $X_f$ in $v'$, that is, for every clock $x \in
  X_f$, the fractional value $\fracpart{v'(x)} = \l'_i$ for some $i
  \in \{1, \dots, n \}$. Let $X_i$ be the set of clocks $x \in X_f$
  that have the fractional value as $\l'_i$:
  \begin{align*}
    X_i = \{ x \in X_f~|~ \fracpart{v'(x)} = \l'_i\}
  \end{align*}
  for $i \in \{1, \dots, n\}$.

  In order to match with the ordering of $v'$, one can see that for
  all clocks $x_i$ in some $X_i$, the value of $\fracpart{v'_1(x_i)}$
  should be the same, and if $x_j \in X_j$ with $i \neq j$, then we
  need to choose $\fracpart{v'_1(x_i)}$ and $\fracpart{v'_1(x_j)}$
  depending on the order between $\l'_i$ and $\l'_j$.

  Therefore, we need to pick $n$ values $0 < \s_1 < \s_2 < \dots <
  \s_n < 1$ and assign for all $x_i \in X_i$, the fractional part
  $\fracpart{v'_1(x_i)} = \s_i$. We show that it can be done by an
  induction involving $n$ steps.

  After the $k^{th}$ step of the induction we assume the following
  hypothesis:
  \begin{itemize}
  \item we have picked values $0 < \s_{n - k+1} < \s_{n -k+2} < \dots
    < \s_n < 1$,
  \item for all clocks $x \in X_{n -k +1} \cup X_{n -k +2} \dots \cup
    X_n$, the $\lu$ condition is satisfied,
  \item for all clocks $y \in X_1 \cup X_2 \dots \cup X_{n-k}$, we
    have
    \begin{align}
      v(y) \le L_y \imp \fracpart{v(y)} < \s_{n - k +1}
    \end{align}
  \end{itemize}
  
  Let us now perform the $k+1^{th}$ step and show that the induction hypothesis
  is true for $k+1$. The task is to pick $\s_{n - k}$.  We first
  define two values $0 < l < 1$ and $0 < u < 1$ as follows:
  \begin{align*}
    l & = \max\big\{\,\fracpart{v(z)}~|~z \in X_{n-k} \text{ and }
    v(z)\leq L_z\,\big\} \\
    u & = \min\big\{\,\{\,\fracpart{v(z)}~|~z \in X_{n-k} \text{ and }
    v(z) \leq U_z\,\} \cup \s_{n-k+1}\,\big\}
  \end{align*}
  We claim that $l\le u$. Firstly, $l < \s_{n-k+1}$ from the third
  part of the induction hypothesis. So if $u$ is $\s_{n-k+1}$ we are
  done. If not, suppose $l > u$, this means that there are clocks $x,
  y \in X^{n-k}$ with $v(x) \leq U_x$ and $v(y) \leq L_y$ such that
  $\fracpart{v(x)} < \fracpart{v(y)}$. From Equation
  \ref{eqn:rlu-prop2}, this would imply that $\fracpart{v'(x)} <
  \fracpart{v'(y)}$. But this leads to a contradiction since we know
  they both equal $\l'_{n-k}$ in $v'$.

  This leaves us with two cases, either $l = u$ or $l < u$. When $l =
  u$, we pick $\s_{n-k} = l = u$. Firstly, from the third part of the
  hypothesis, we should have $l < \s_{n-k+1}$ and so $\s_{n-k} <
  \s_{n-k+1}$. Secondly for all $z \in X_{n-k}$, if $v'_1(z) < v(z)$,
  then $z$ should not contribute to $l$ and so $v(z) > L_z$, which is
  equivalent to saying, $v'_1(z) > L_z$. Similarly, if $v'_1(z) >
  v(z)$, then $z$ should not contribute to $u$ and so $v(z) > U_z$,
  thus satisfying the $\lu$ condition for $z$. Finally, we should show
  the third hypothesis. Consider a clock $y \in X_1 \cup \dots \cup
  X_{n-k-1}$ with $v(y) < L_y$. If $\fracpart{v(y)} \ge \s_{n-k}$, it
  would mean that $\fracpart{v(y)} \ge u$ and from Equation
  \ref{eqn:rlu-prop2} gives a contradiction. So the three requirements
  of the induction assumption are satisfied after this step in this
  case.

  Now suppose $l < u$. Consider a clock $y \in X_1 \cup \dots \cup
  X_{n-k-1}$ such that $v(y) < L_y$. From Equation
  \ref{eqn:rlu-prop2}, we should have $\fracpart{v(y)} < u$. Take the
  maximum of $\fracpart{v(y)}$ over all such clocks:
  \begin{align*}
    \l = \max\{ \fracpart{v(y)}~|~ y \in X_1 \cup \dots \cup X_{n-k-1}
    \text{ and } v(y) < L_y \}
  \end{align*}
  Choose $\s_{n-k}$ in the interval $(\l, u)$.  We can see that all
  the three assumptions of the induction hold after this step.
  
\end{proof}

We can now prove the main result of this section. Recall
Definition~\ref{def:time-elapsed-zone} of time-elapsed zones. 

\begin{theorem}
  \label{thm:alu_biggest}
  If $Z$ is time-elapsed then
  \begin{equation*}
    \abslu(Z) = \alu(Z)
  \end{equation*}
\end{theorem}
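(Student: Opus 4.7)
The plan is to prove the two inclusions separately.

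The direction $\alu(Z) \subseteq \abslu(Z)$ does not require the time-elapsed hypothesis. I would establish it by showing that $\lu$ is itself an LU-simulation: given $v \lu v'$ and a transition $v \xra{g,R} v_1$ with LU-guard $g$, a suitably chosen delay produces a matching transition $v' \xra{g,R} v_1'$ with $v_1 \lu v_1'$. This is essentially the soundness argument of~\cite{Behrmann:STTT:2006}. Since $\ftalu$ is by definition the biggest LU-simulation, we obtain $\lu \;\subseteq\; \ftalu$, and the inclusion $\alu(Z) \subseteq \abslu(Z)$ then follows directly from the definitions of the two abstractions.

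For the reverse inclusion $\abslu(Z) \subseteq \alu(Z)$, take $v \in \abslu(Z)$, so there exists $v' \in Z$ with $v \ftalu v'$. By Lemma~\ref{lem:rLU-sim}, there is some $\d' \in \Rpos$ such that $v'' := v' + \d'$ lies in $\rlu(v)$. Because $Z$ is time-elapsed, we moreover have $v'' \in Z$. Applying the Adjustment Lemma~\ref{lem:adjustment} to $v$ and $v''$ then produces a valuation $v_1'' \in \nbd{v''}$ with $v \lu v_1''$. If we can conclude that $v_1'' \in Z$, then $v \in \alu(Z)$, which is exactly what we want.

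The main obstacle is therefore the following closure claim: for every zone $Z$ and every $v'' \in Z$, the entire neighbourhood $\nbd{v''}$ is contained in $Z$. I would prove this by inspecting each defining constraint of $Z$, of the form $x_i - x_j \# c$ with $c \in \Nat$. Since $v''$ and $v_1''$ have identical integer parts, and since the fractional differences $\fracpart{v''(x_i)} - \fracpart{v''(x_j)}$ and $\fracpart{v_1''(x_i)} - \fracpart{v_1''(x_j)}$ share the same sign pattern (the neighbourhood preserves both equality and strict ordering of fractional parts), the real numbers $v''(x_i) - v''(x_j)$ and $v_1''(x_i) - v_1''(x_j)$ either coincide as the same integer $N$ or both lie strictly inside the same open unit interval $(N,N+1)$. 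Because the threshold $c$ is itself an integer, both differences satisfy the constraint or both violate it, so $v_1'' \in Z$. This is precisely where the time-elapsed hypothesis enters: without it we would only know $v' \in Z$ rather than $v'' = v' + \d' \in Z$, and we could not transport the Adjustment Lemma's conclusion back inside $Z$.
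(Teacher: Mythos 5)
Your proposal is correct and follows essentially the same route as the paper's proof: $\lu\subseteq\ftalu$ for the easy inclusion, and Lemma~\ref{lem:rLU-sim} plus the time-elapsed hypothesis plus the Adjustment Lemma~\ref{lem:adjustment} for the converse, finishing with the observation that a zone contains the whole neighbourhood of any of its points. Your explicit justification of that last closure claim (integer thresholds versus preserved integer parts and fractional orderings) is a correct expansion of what the paper states in one sentence.
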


\begin{proof}
  Suppose $v \in \alu(Z)$. There exists a $v' \in Z$ such that $v \lu
  v'$. It can be easily verified that $\lu$ is a $LU$-simulation
  relation. Since $\ftalu$ is the biggest LU-simulation, we get that
  $v \ftalu v'$. Hence $v \in \abslu(Z)$.

  Suppose $v \in \abslu(Z)$. There exists $v' \in Z$ such that $v
  \ftalu v'$. From Proposition~\ref{prop:rLU-sim}, this implies that there exists
  a $\d'$ such that $v' + \d' \in \rrlu{v}$. As $Z$ is time-elapsed, we
  get $v' + \d' \in Z$. Moreover, from Lemma~\ref{lem:adjustment}, we
  know that there is a valuation $v'_1 \in \nbd{v' + \d'}$ such that
  $v \lu v'_1$. Every valuation in the neighborhood of $v'+\d'$
  satisfies the same constraints of the form $y - x \lleq c$ with
  respect to all clocks $x,y$ and hence $v'_1$ belongs to $Z$
  too. Therefore, we have a valuation $v'_1 \in Z$ such that $v \lu
  v'_1$ and hence $v \in \alu(Z)$.
\end{proof}

\subsection{Using $\alu$ to solve the reachability problem}
\label{sec:using-alu-solve}

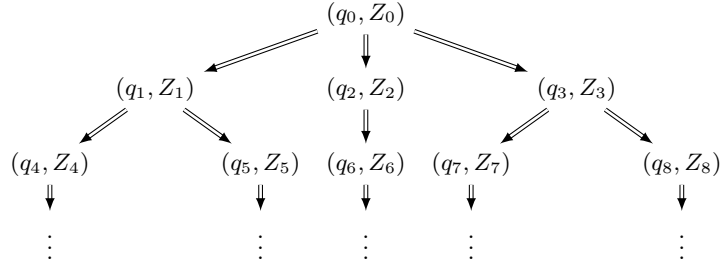
\begin{figure}[!t]
  \centering {\small
  \begin{tikzpicture}[level distance=1cm, sibling distance=2.8cm,
    double distance=1pt,>=latex] 
    \node {$(q_0, Z_0)$}
    child { node {$(q_1, Z_1)$} edge from parent[double,->]
      child { node {$(q_4, Z_4)$} edge from parent[double,->] 
        child { node {$\vdots$} edge from parent[double,->] }
      }
      child { node {$(q_5, Z_5)$} edge from parent[double,->]
         child { node {$\vdots$} edge from parent[double,->] }
       }
    }
    child { node {$(q_2, Z_2)$} edge from parent[double,->] 
       child { node {$(q_6, Z_6)$} edge from parent[double,->] 
        child { node {$\vdots$} edge from parent[double,->] }
      }
    }
    child { node {$(q_3, Z_3)$} edge from parent[double,->]
      child { node {$(q_7, Z_7)$} edge from parent[double,->] 
        child { node {$\vdots$} edge from parent[double,->] }
      }
      child { node {$(q_8, Z_8)$} edge from parent[double,->]
         child { node {$\vdots$} edge from parent[double,->] }
       }
    };
  \end{tikzpicture}
}

  \caption{A reachability tree of zones computed by a forward
    exploration}
  \label{fig:reach-tree}
\end{figure}

A forward exploration algorithm for solving the reachability problem
constructs the reachability tree starting from the initial node
$(q_0,Z_0)$ (cf. Figure~\ref{fig:reach-tree}), with
$Z_0 = \{ \vali + \d~|~ \d \in \Rpos\}$. The successor with respect to
$\tto$ can be computed in time $\Oo(|X|^2)$ where $X$ is the number of
clocks~\cite{Zhao:IPL:2005}. By definition $\tto$ computes a
time-elapsed zone. Therefore, all nodes that are explored by the
algorithm have time-elapsed zones.

Before continuing exploration from a node $(q,Z)$, the algorithm first
checks if $q$ is accepting. If not, the algorithm checks if for some
visited node $(q,Z')$, we have $Z \incl \alu(Z')$. If this is the
case, $(q,Z)$ need not be explored. Otherwise, the successors of
$(q,Z)$ are computed as stated above. This way we ensure termination
of the algorithm since $\alu$ is a finite
abstraction~\cite{Behrmann:STTT:2006}.

Since the reachability algorithm refers to only time-elapsed zones,
Theorems~\ref{thm:alu-biggest} and~\ref{thm:alu_biggest} show that
$\alu$ is the coarsest sound and complete abstraction provided the only
thing we know about the structure of the automaton are its $L$ and $U$
bounds. Recall that coarser abstractions make abstract graph smaller,
so the exploration algorithm can finish faster.

In Definition~\ref{def:LU-bounds}, we introduced $LU$-bounds that
associate an $L$ bound and a $U$ bound to every clock in an automaton
$\Aa$. Those bounds are the same in every state of the
automaton. Instead, state-of-the-art algorithms calculate $LU$-bounds
for each state of the automaton separately~\cite{Behrmann:TACAS:2003},
or even on-the-fly during
exploration~\cite{Herbreteau:FSTTCS:2011,CAV}. The maximality argument
in favor of $\alu$ is of course true also in this case.

The last missing piece is an efficient inclusion test $Z\incl
\alu(Z')$. This is the main technical contribution of this paper.


\section{An $\Oo(|X|^2)$ algorithm for $Z \incl \alu(Z')$}
\label{sec:an-oox2-algorithm}

In this section, we present an efficient algorithm for the inclusion
test $Z \incl \alu(Z')$ (Theorem~\ref{thm:effic-inc-test}). In the
algorithm for the reachability problem as explained in
Section~\ref{sec:using-alu-solve}, each time a new node $(q,Z)$ is
seen, it is checked if there exists an already visited node $(q, Z')$ such that
$Z \incl \alu(Z')$. This means that a lot of such inclusion tests
need to be performed during the course of the algorithm. Hence it is
essential to have a low complexity for this inclusion procedure. We
are aiming at quadratic complexity as this is the complexity incurred
in the existing algorithms for inclusions of the form $Z \incl Z'$
used in the standard reachability algorithm.  It is well known that
all the other operations needed for forward exploration, can be done
in at most quadratic time~\cite{Zhao:IPL:2005}.

We will now characterize when $Z \not \incl \alu(Z')$ holds. The main
steps are outlined below.

\begin{description}

\item \textbf{Step 1}. As a first step, we reduce the inclusion
  problem to a problem of intersection in
  Section~\ref{sec:incl-inters}. The question $Z \not \incl \alu(Z')$
  boils down to asking if there exists a valuation $v \in Z$ such that
  its LU-region $\lureg{v}$ does not intersect $Z'$
  (Proposition~\ref{prop:incl-int}). 

\item \textbf{Step 2}. As a next step, we consider the intersection
  $\lureg{v} \cap Z'$. We aim to show that this intersection can be
  decided by looking at projections on every pair of clocks
  (Proposition~\ref{prop:neg-cycles-in-min}). This is the most
  difficult step in the way to the inclusion test and spans three
  sections.  We first describe a convenient graph representation of
  zones in Section~\ref{sec:distance-graphs}. We call this the
  distance graph and will use it to represent $Z'$. Subsequently, in
  Section~\ref{sec:lu-regions-as}, we see how we can represent
  LU-regions as distance graphs. This gives a distance graph
  representation for $\lureg{v}$.  Finally, in
  Section~\ref{sec:when-does-an}, we analyze the graph representations
  of $\lureg{v}$ and $Z'$ to see when the intersection $\lureg{v} \cap
  Z'$ is empty. We show that for this it is enough to look at two clocks at a
  time. 

\item \textbf{Step 3}. The previous step gives the condition for
  $\lureg{v} \cap Z'$ to be empty. We now look at the zone $Z$ to find
  out quickly the valuation $v \in Z$ that can potentially satisfy this
  condition (Proposition~\ref{prop:minimum-val-of-xy}).

\item \textbf{Step 4}. We substitute the valuation obtained from Step
  3 to the condition of Step 2 to give the efficient test for
  inclusion (Theorem~\ref{thm:effic-inc-test}). Both steps 3 and 4 appear in
  Section~\ref{sec:final-step}.
\end{description}

\paragraph*{Notation} For notational convenience, we denote $v(x)$ by $v_x$ for a valuation
$v$ and clock $x$.

\subsection{Reducing inclusion to intersection}
\label{sec:incl-inters}

The aim of this chapter is to reduce the question of inclusion to a
question of intersection.  The adjustment lemma
(Lemma~\ref{lem:adjustment}) shows a close connection between
$LU$-regions and $\lu$-simulation in one direction: that is, if $v'
\in \erlu{v}$ then we can find a valuation $v_1'$ in the neighborhood
of $v'$ such that $v \lu v_1'$. We show below a connection in the
other direction too.

\begin{lemma}\label{lem:lu-imp-rlu}
  Let $v,v'$ be valuations. If $v \lu v'$, then $v' \in \erlu{v}$.
\end{lemma}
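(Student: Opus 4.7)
The plan is to verify directly the two conditions defining $r_{LU}(v)$, using only the two clauses of the definition of $\lu$. The whole argument is a clock-by-clock case analysis based on the position of $v(x)$ relative to $L_x$ and $U_x$, with no need for any auxiliary machinery.

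For the LU-guard condition I would split on guard type. For a lower-bound guard $x\geq c$ or $x>c$ with $c\leq L_x$ satisfied by $v$: either $v'(x)\geq v(x)$, and $v'$ inherits satisfaction trivially, or $v'(x)<v(x)$, in which case the first clause of $\lu$ forces $v'(x)>L_x\geq c$. For an upper-bound guard $x<c$ or $x\leq c$ with $c\leq U_x$ satisfied by $v$: then $v(x)\leq U_x$, and the contrapositive of the second clause of $\lu$ rules out $v'(x)>v(x)$, so $v'(x)\leq v(x)$ and $v'$ passes the guard as well.

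For the fractional-ordering condition I would take a pair $x,y$ with $\intpart{v(x)}=\intpart{v'(x)}$, $\intpart{v(y)}=\intpart{v'(y)}$, $v(x)\leq U_x$, and $v(y)\leq L_y$, and first derive that in fact $v'(x)\leq v(x)$ and $v'(y)\geq v(y)$. The former follows as above: $v(x)\leq U_x$ together with the contrapositive of the second $\lu$ clause forbids $v'(x)>v(x)$. For the latter, if $v'(y)<v(y)$ then the first $\lu$ clause gives $v'(y)>L_y\geq v(y)$, a contradiction; hence $v'(y)\geq v(y)$. Combined with matching integer parts these yield $\fracpart{v'(x)}\leq\fracpart{v(x)}$ and $\fracpart{v'(y)}\geq\fracpart{v(y)}$, and both ordering implications then follow by chaining: in the strict case $\fracpart{v'(x)}\leq\fracpart{v(x)}<\fracpart{v(y)}\leq\fracpart{v'(y)}$, and in the equality case the same chain with the middle $<$ replaced by $=$ yields $\fracpart{v'(x)}\leq\fracpart{v'(y)}$.

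The proof is essentially direct unpacking of definitions and contains no real obstacle. The only point requiring a bit of care is ensuring that the contrapositive applications of the two $\lu$ clauses produce exactly the right strict/nonstrict inequalities so that they combine cleanly with the integer-part hypotheses; in particular, one must notice that the bounds $v(x)\leq U_x$ and $v(y)\leq L_y$ appearing in the definition of $\rlu$ are precisely what let the relevant $\lu$ clauses fire.
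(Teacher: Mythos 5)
Your proof is correct and follows essentially the same route as the paper's: the fractional-part condition is handled by exactly the same chaining argument (deriving $v'(x)\le v(x)$ from $v(x)\le U_x$ via the second clause of $\lu$, and $v'(y)\ge v(y)$ from $v(y)\le L_y$ via the first), while the guard condition, which the paper dismisses as ``not difficult to see,'' is the same case analysis spelled out explicitly. The only remark is that you verify one direction of ``satisfies the same LU-guards'' (every guard satisfied by $v$ is satisfied by $v'$); that is the direction the rest of the paper actually relies on, and the only one that holds in general when $L_x \neq U_x$.
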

\begin{proof}
  It is not difficult to see from the definition of $\lu$
  (Definition~\ref{def:lu-preorder}) that both $v$ and $v'$ satisfy
  the same LU-guards. It remains to show the second property for $v'$
  to be in $\erlu{v}$.

  Let $x, y$ be clocks such that $\intpart{v_x} = \intpart{v'_x}$,
  $\intpart{v_y} = \intpart{v'_y}$, and $v_x \le U_x$, $v_y \le
  L_y$. Suppose $\fracpart{v_x} \lleq \fracpart{v_y}$, for $\lleq$
  being either $<$ or $=$.  As $v \lu v'$, if $v'_x > v_x$, we need
  $v_x > U_x$ which is not true. Hence we can conclude that $v'_x \le
  v_x$. Similarly, for $y$, one can conclude that $v'_y \ge v_y$. As
  the integer parts are the same in $v$ and $v'$, we get
  $\fracpart{v'_x} < \fracpart{v'_y}$ or $\fracpart{v'_x} \le
  \fracpart{v'_y}$ depending on whether $\lleq$ is $<$ or $=$.
\end{proof}

The above along with the adjustment lemma help us to reduce the
question of inclusion as a question of
intersection. \label{page:prop-incl-int}

\begin{proposition}\label{prop:incl-int}
  Let $Z,Z'$ be zones. Then, $Z \not \incl \alu(Z')$ iff there exists
  a valuation $v \in Z$ such that $\erlu{v} \cap Z'$ is empty.
\end{proposition}
\begin{proof}
  Consider the left-to-right direction. Suppose $Z \not \incl
  \alu(Z')$. Then there exists a valuation $v \in Z$ such that for
  every valuation $v' \in Z'$ we have $v \not \lu v'$. Pick an
  arbitrary $v' \in Z'$. In particular, every valuation $v_1' \in
  \nbd{v'}$ satisfies $v \not \lu v_1'$. From the Adjustment
  lemma~\ref{lem:adjustment}, we get that $v' \not \in
  \erlu{v}$. Since $v'$ is arbitrary, we get that $\erlu{v} \cap Z'$
  is empty.

  Now for the right-to-left direction. Suppose $\erlu{v} \cap Z'$ is
  empty. Then by Lemma~\ref{lem:lu-imp-rlu}, we get that $v \not \lu
  v'$ for every valuation $v' \in Z'$. This shows that $Z \not \incl
  \alu(Z')$.
\end{proof}

\subsection{Distance graphs}
\label{sec:distance-graphs}

Thanks to Proposition~\ref{prop:incl-int}, we know that to solve $Z
\not \incl \alu(Z')$, we need to check if there exists a valuation $v
\in Z$ such that its LU-region $\erlu{v}$ does not intersect with the
zone $Z'$. The focus now is to study this intersection.

We will begin with a convenient representation of zones that we use to
solve this intersection question. The standard way to represent a zone
is using a DBM. An equivalent representation is in terms of
graphs~\cite{DBLP:journals/jacm/Shostak81}
which we call here \emph{distance graphs}. 

\begin{definition}[Distance graph]\label{def:distance-graphs}
  A \emph{distance graph} $G$ has clocks as vertices, with an
  additional special vertex $x_0$ representing constant $0$. Between
  every two vertices there is an edge with a \emph{weight} of the form
  $(\lleq, c)$ where $c\in \mathbb{Z}$ and $\lleq \in \{\le, < \}$ or
  $(\lleq,c) = (<, \infty)$. An edge $x\act{\lleq c} y$ represents a
  constraint $y-x\lleq c$: or in words, the distance from $x$ to $y$
  is bounded by $c$. We let $\sem{G}$ be the set of valuations of
  clock variables satisfying all the constraints given by the edges of
  $G$ with the restriction that the value of $x_0$ is $0$. We also say
  that the \emph{$y - x$-constraint} of $\sem{G}$ is $\lleq c$ if the
  weight of the $x \xra{} y$ edge in $G$ is $\lleq c$.
\end{definition}

For readability, we will often write $0$ instead of
$x_0$. Figure~\ref{fig:zone-dist-graph} illustrates a zone with its
constraints and the corresponding distance graph. 
\begin{figure}[t!]
  \centering \begin{tikzpicture}[scale=0.7]

  \begin{scope}
  \draw[->, very thick, >=stealth] (0,0) -- (0,5); \draw[->, very
  thick, >=stealth] (0,0) -- (6,0);

  \draw (6.3,0) node {\scriptsize $x$} (7, 0); \draw (0, 5.3) node
  {\scriptsize $y$} (0,6.3);

  \draw[very thin, fill=gray, nearly transparent]
  (1,1) -- (1,2) -- (3,4) -- (5,4) -- (5,3) -- (3,1) -- cycle; 
  
  \draw (1,2) -- (4,5);
  \draw[white] (4.5,5.5) -- node[sloped, black] {\footnotesize $y - x
  \lleq_{xy} c_{xy}$} (5.5,6.5) ;
  
\draw (1,0) -- (1,5);
  \draw[white] (1, -0.3) -- node[black] {\footnotesize $0 - x \lleq_{x0} c_{x0}$} (1,
  -0.3);   

  \draw (0,1) -- (7,1);
  \draw[white] (8, 1) -- node[black] {\footnotesize $0-y \lleq_{y0} c_{y0}$} (8.5,1);  

  \draw (3,1) -- (7,5);
  \draw[white] (7.8,5.8) -- node[sloped, black] {\footnotesize $x - y
    \lleq_{yx} c_{yx}$} (8,6);

  \draw (5,0) -- (5,5);
  \draw[white] (4.5, -0.3) -- node[black] {\footnotesize $x - 0 \lleq_{0x} c_{0x}$} (5,
  -0.3); 

  \draw (0,4) -- (7,4);
  \draw[white] (8, 4) -- node[black] {\footnotesize $y -0 \lleq_{0y} c_{0y}$} (8.5,4);

\end{scope}

\begin{scope}[xshift=0.8cm, yshift=2.5cm, scale=1.3]
    \begin{scope}[xshift=7.5cm, yshift=-0.5cm,  scale = 1.2]
      
      \begin{scope}[thick, decoration = {markings, mark=at position
          0.5 with {\arrow{>}}}]
        \draw[postaction={decorate}] (0,0)
          .. controls (0.5,0.5) and (1.5,0.5) .. (2,0);
        \draw[postaction={decorate}] (2,0) .. controls
          (1.5,-0.5) and (0.5,-0.5).. (0,0);
        \draw[postaction={decorate}] (0,0) .. controls
          (1, 1.5) and (3, 1.5) .. (4, 0);

        \draw[postaction={decorate}] (4,0)
          .. controls (3, -1.5) and (1, -1.5) .. (0, 0);
       \draw[postaction={decorate}] (2,0) .. controls
          (2.5, 0.5) and (3.5, 0.5) .. (4, 0);
        
        \draw[postaction={decorate}] (4,0)
          .. controls (3.5, -0.5) and (2.5, -0.5) .. (2, 0);

      \end{scope}

      \foreach \x in {0,1,2} \fill (2*\x, 0) circle
        (2pt); \draw (-0.1,-0.3) node {\scriptsize $0$} (1,-0.3);
        \draw (2,-0.3) node {\scriptsize $x$} (3,-0.4); \draw
        (4.1,-0.3) node {\scriptsize $y$} (5,-0.3); 
      
      \draw (1.2, -0.6) node {\scriptsize $\lleq_{x0}~ c_{x0}$} (2, -0.8); 
      \draw (1.2, 0.6) node {\scriptsize
           $\lleq_{0x}~ c_{0x}$ } (2,0.8);
     \draw (2,1.5) node {\scriptsize $\lleq_{0y} ~c_{0y}$} (3,2); \draw
        (2, -1.5) node {\scriptsize $\lleq_{y0}~ c_{y0}$} (3,-2);
      \draw (2.8, 0.6) node {\scriptsize $\lleq_{xy}~ c_{xy}$} (3, 0.6);
      \draw (2.8, -0.6) node { \scriptsize $\lleq_{yx}~c_{yx}$} (3,
      -0.6);
     
    \end{scope}

\end{scope}
\end{tikzpicture}

  \caption{An arbitrary zone with its constraints and distance graph}
  \label{fig:zone-dist-graph}
\end{figure}
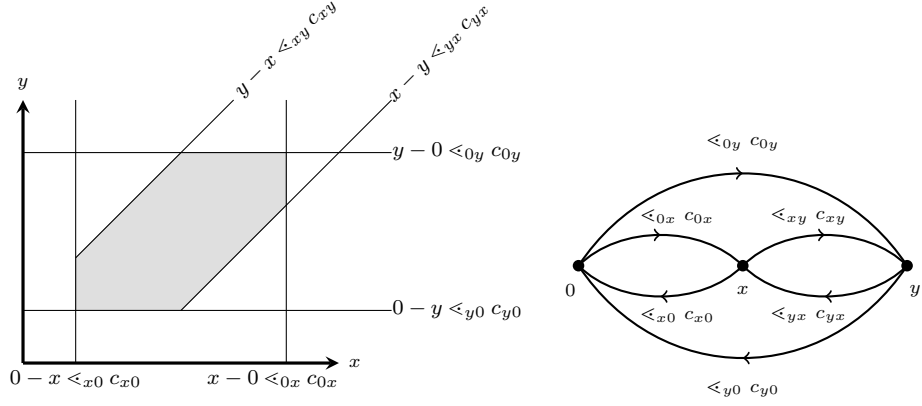

An arithmetic over the weights $(\lleq, c)$ can be defined as
follows~\cite{bengtsson2004timed}.
\begin{description}
\item \emph{Equality} $(\lleq_1,c_1) = (\lleq_2,c_2)$ if $c_1 = c_2$
  and $\lleq_1 = \lleq_2$.
\item \emph{Addition} $(\lleq_1,c_1) + (\lleq_2,c_2) = (\lleq,c_1 +
  c_2)$ where $\lleq = <$ iff either $\lleq_1$ or $\lleq_2$ is $<$.
\item \emph{Minus} $-(\lleq,c) = (\lleq, -c)$.
\item \emph{Order} $(\lleq_1,c_1) < (\lleq_2,c_2)$ if either $c_1 <
  c_2$ or ($c_1 = c_2$ and $\lleq_1 = < $ and $\lleq_2 = \le$).
\end{description}
This arithmetic lets us talk about the weight of a path as a weight of
the sum of its edges.

A cycle in a distance graph $G$ is said to be \emph{negative} if the
sum of the weights of its edges is at most $(<,0)$; otherwise the
cycle is \emph{positive}. The following proposition is folklore.
\begin{proposition}\label{prop:cycles}
  A distance graph $G$ has only positive cycles iff $\sem{G}\not=\es$.
\end{proposition}

  A distance graph is in \emph{canonical
  form} if the weight of the edge from $x$ to $y$ is the lower bound
of the weights of paths from $x$ to $y$. The canonical form only
exists for graphs without negative cycle.
Given a distance graph, its canonical form can be computed by using an
all-pairs shortest paths algorithm like
Floyd-Warshall's~\cite{bengtsson2004timed} in time $\Oo(|X|^3)$ where
$|X|$ is the number of clocks. Note that the number of vertices in the
distance graph is $|X| + 1 $. For computing the successors of a node
in the zone graph, the most complex operation is the
computation of $Z \land g$ which involves a canonicalization
operation. However, since $g$ has diagonal free constraints, the
canonicalization procedure involved to compute $Z \land g$ is easier
and costs only $\Oo(|X|^2)$~\cite{Zhao:IPL:2005}.
\medskip

Recall that we have reduced the problem $Z \not\incl \alu(Z')$ to
checking if there exists $v$ such that $\erlu{v} \cap Z'$ is
empty. For this, we need to know when the intersection of an LU-region
and a zone is empty. In the next section we will see that an LU-region
is a zone and can be represented using a distance graph. Therefore, it
boils down to asking given two distance graphs $G_1$ and $G_2$ when is
$\sem{G_1} \cap \sem{G_2}$ empty.

For two distance graphs $G_1$, $G_2$ which are not necessarily in
canonical form, we denote by $\min(G_1,G_2)$ the distance graph where
each edge has the weight equal to the minimum of the corresponding
weights in $G_1$ and $G_2$.  Even though this graph may be not in
canonical form, it should be clear that it represents intersection of
the two arguments, that is,
$\sem{\min(G_1,G_2)}=\sem{G_1}\cap\sem{G_2}$; in other words, the
valuations satisfying the constraints given by $\min(G_1,G_2)$ are
exactly those satisfying all the constraints from $G_1$ as well as
from $G_2$.

Proposition~\ref{prop:cycles} tells us that the intersection $\sem{G_1} \cap
\sem{G_2}$ is empty iff the distance graph $\min(G_1, G_2)$ has a
negative cycle.

\subsection{LU-regions as distance graphs}
\label{sec:lu-regions-as}

Our aim is to check when the intersection of $\lureg{v}$ and $Z'$ is
empty.  We saw in the previous subsection that zones can be conveniently
and canonically represented by distance graphs. Here, we will see
how we can canonically represent an $LU$-region of a valuation as a
distance graph.

We will first recall a constructive definition of Alur-Dill regions.

\medskip

\begin{definition}[Regions: constructive definition]\label{def:region-constructive}
  A region with respect to bound function $M$ is the set of valuations
  specified as follows:\label{def:regions}
  \begin{enumerate}
  \item for each clock $x\in X$, one constraint from the set:

    $\{x = c ~\mid~ c=0, \dots, M_x \} \cup \{c - 1 < x < c ~\mid~ c =
    1, \dots, M_x \} \cup \{x > M_x \}$
  
    \smallskip
  \item for each pair of clocks $x,y$ having interval constraints:
    $c-1< x <c$ and $d-1< y< d$, it is specified if $\fracpart{x}$ is
    less than, equal to or greater than $\fracpart{y}$.
  \end{enumerate}
\end{definition}

The distance graph representing a region can be constructed using the
above constructive definition of a region. 
For a valuation $v$, let $\gv$ denote the canonical distance graph
representing the region $\reg{v}$. We are now interested in getting
the LU-region of $v$, that is, $\erlu{v}$ as a distance graph. For
convenience, we will recall below the definition of LU-regions.

\medskip

\Repeat{Definition}{def:lu-region}\textbf{(LU-region)}

For a valuation $v$ we define its \emph{LU-region}, denoted
$\rrlu{v}$, to be the set of valuations $v'$ such that:
\begin{itemize}
\item $v'$ satisfies the same $LU$-guards as $v$.
 
\item For every pair of clocks $x,y$ with
  $\intpart{v(x)}=\intpart{v'(x)}$, $\intpart{v(y)} =
  \intpart{v'(y)}$, $v(x)\leq U_x$ and $v(y)\leq L_y$ we have:
  \begin{itemize}
  \item if $0< \fracpart{v(x)} < \fracpart{v(y)}$ then
    $\fracpart{v'(x)} < \fracpart{v'(y)}$.
  \item if $0 < \fracpart{v(x)} = \fracpart{v(y)}$ then
    $\fracpart{v'(x)} \leq \fracpart{v'(y)}$.
  \end{itemize}
\end{itemize}

For a valuation $v$, we need to collect all valuations $v'$ satisfying
the above two conditions to get $\lureg{v}$. We begin with a
motivating example in Figure~\ref{fig:lu-region-dist-graph}. In the
(a) part of the figure, we consider a valuation $v$ such that $v_x >
U_x$. The shaded portion in Figure~\ref{fig:lu-region-dist-graph} (a)
shows the region $\reg{v}$ and the finite valued $x-y$
and $x-0$ constraints bounding this region
(c.f. Definition~\ref{def:distance-graphs} for definition of
$x-y$-constraints and Figure~\ref{fig:zone-dist-graph} for an
illustration). The LU-region $\lureg{v}$ is shown in
Figure~\ref{fig:lu-region-dist-graph} (b). Observe that it matches
with the definition given above. But more importantly, note that it
can be seen as a transformation of $\reg{v}$ by moving its $x-y$ and
$x-0$ constraints to infinity and keeping the rest same.

We now consider a valuation $v$ with $v_y > L_y$ in
Figure~\ref{fig:lu-region-dist-graph} (c). Once again, the shaded
portion shows the region $\reg{v}$ and the relevant boundary constraints. We
depict the LU-region $\lureg{v}$ in
Figure~\ref{fig:lu-region-dist-graph} (d) matching the definition
given above. Note that it can be seen as a transformation of $\reg{v}$
by moving the $x-y$ constraint to infinity and the $0-y$ constraint up to
$L_y$. However, when we move $x-y$ to infinity, the graph that
we get would no longer be canonical. We could then consider the
canonicalization of the transformed graph.

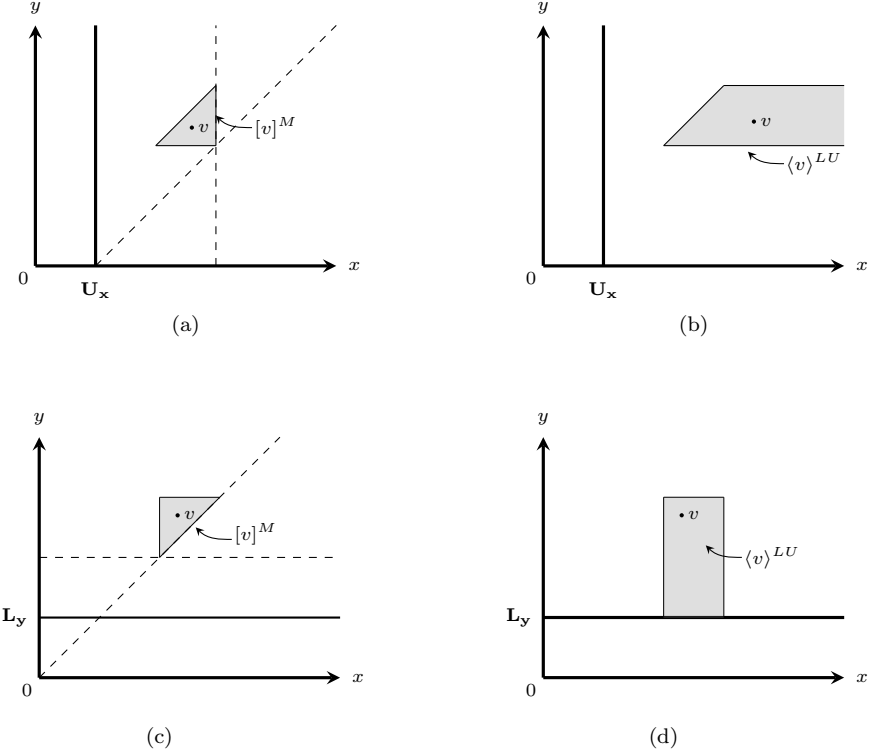
\begin{figure}[t!]
  \centering \hspace{0.1cm}
\begin{tikzpicture}[scale = 0.8]

      \draw[very thick] (1,0) -- (1,4);
      \draw (1, -0.4) node {\scriptsize $\mathbf{U_x}$} (1.4, -0.4); 
      \draw[->, very thick, >=stealth] (0,0) -- (0,4); \draw[->, very
      thick, >=stealth] (0,0) -- (5,0);

      \draw (5.3,0) node {\scriptsize $x$} (6, 0); \draw (0, 4.3) node
      {\scriptsize $y$} (0,4.3); \draw (-0.2,-0.2) node {\scriptsize
        $0$} (0,-0.2);

      \begin{scope}[xshift=1cm]
      \fill[gray, nearly transparent] (2,2) -- (2,3) --(1,2) --
      cycle;  
      \draw (2,2) -- (2,3);
      \draw (2,3) -- (1,2);
      \draw (1,2) -- (2,2);
      \draw[dashed] (2,0) -- (2,4);
      \draw[dashed] (0,0) -- (4,4);
      
      \begin{scope}[xshift=-0.7cm, yshift=-0.4cm]
      \draw (2.5, 2.7) node {\scriptsize $v$} (3,2.7);
      \fill (2.3, 2.7) circle (1pt);
      \end{scope}

      \begin{scope}[xshift=-0.2cm, yshift=0.3cm]
      \draw (3.2, 2) node {\scriptsize $\reg{v}$} (4,2); 
      \draw[->, >=stealth] (2.8,2) .. controls (2.6, 2) and (2.4,2)
      .. (2.2,2.2); 
      \end{scope}
      \draw (1.5, -1) node {\footnotesize (a) } (4,-1); 
      \end{scope}

 \end{tikzpicture}
\hfill
\begin{tikzpicture}[scale = 0.8]

      \draw[very thick] (1,0) -- (1,4);
      \draw (1, -0.4) node {\scriptsize $\mathbf{U_x}$} (1.4, -0.4); 
      \draw[->, very thick, >=stealth] (0,0) -- (0,4); \draw[->, very
      thick, >=stealth] (0,0) -- (5,0);

      \draw (5.3,0) node {\scriptsize $x$} (6, 0); \draw (0, 4.3) node
      {\scriptsize $y$} (0,4.3); \draw (-0.2,-0.2) node {\scriptsize
        $0$} (0,-0.2);

      \begin{scope}[xshift=1cm]
      \fill[gray, nearly transparent] (1,2) -- (2,3) -- (4,3) --
      (4,2) -- cycle;  
      \draw (1,2) -- (2,3);
      \draw (2,3) -- (4,3);
      \draw (4,2) -- (1,2);

      \begin{scope}[xshift = 0.2cm, yshift = -0.3cm]
      \draw (2.5, 2.7) node {\scriptsize $v$} (3,2.7);
      \fill (2.3, 2.7) circle (1pt);
       \draw (3.3, 2) node {\scriptsize $\lureg{v}$} (4,2); 
      \draw[->, >=stealth] (2.8,2) .. controls (2.6, 2) and (2.4,2) .. (2.2,2.2); 
      \end{scope}

      \draw (1.5, -1) node {\footnotesize (b)} (4,-1); 

      \end{scope}

 \end{tikzpicture}

\vspace{0.75cm}

\begin{tikzpicture}[scale = 0.8]

\begin{scope}[yshift=-0.8cm]
     
      \draw[thick] (0,1) -- (5,1);
      \draw (-0.4, 1) node {\scriptsize $\mathbf{L_y}$} (0,1); 
      \draw[->, very thick, >=stealth] (0,0) -- (0,4); \draw[->, very
      thick, >=stealth] (0,0) -- (5,0);

      \draw (5.3,0) node {\scriptsize $x$} (6, 0); \draw (0, 4.3) node
      {\scriptsize $y$} (0,4.3); \draw (-0.2,-0.2) node {\scriptsize
        $0$} (0,-0.2);

      
      \fill[gray, nearly transparent] (2,2) -- (2,3) --(3,3) --
      cycle;  
      \draw[very thin] (2,2) -- (2,3);
      \draw[very thin] (2,3) -- (3,3);
      \draw[very thin] (2,2) -- (3,3);
      \draw[dashed] (0,2) -- (5,2);
      \draw[dashed] (0,0) -- (4,4);
     
      \draw (2.5, 2.7) node {\scriptsize $v$} (3,2.7);
      \fill (2.3, 2.7) circle (1pt);
      \draw (3.6, 2.4) node {\scriptsize $\reg{v}$} (4,2); 
      \draw[->, >=stealth] (3.2,2.3) .. controls (3, 2.3) and
      (2.8,2.3) .. (2.6,2.5); 
      \draw (2, -1) node {\footnotesize (c) } (4,-1); 
      \end{scope}

\end{tikzpicture}
\hfill
\begin{tikzpicture}[scale=0.8]
      \draw[very thick] (0,1) -- (5,1);
      \draw (-0.4, 1) node {\scriptsize $\mathbf{L_y}$} (0,1); 
      \draw[->,very thick, >=stealth] (0,0) -- (0,4); \draw[->, very
      thick, >=stealth] (0,0) -- (5,0);

      \draw (5.3,0) node {\scriptsize $x$} (6, 0); \draw (0, 4.3) node
      {\scriptsize $y$} (0,4.3); \draw (-0.2,-0.2) node {\scriptsize
        $0$} (0,-0.2);

      \fill[gray, nearly transparent] (2,1) -- (2,3) -- (3,3) --
      (3,1) -- cycle;
      \draw[very thin] (2,1) -- (2,3);
      \draw[very thin] (2,3) -- (3,3);
      \draw[very thin] (3,3) -- (3,1);

      \draw (2.5, 2.7) node {\scriptsize $v$} (3,2.7);
      \fill (2.3, 2.7) circle (1pt);
      \begin{scope}[xshift = 0.5cm]
       \draw (3.3, 2) node {\scriptsize $\lureg{v}$} (4,2); 
      \draw[->, >=stealth] (2.8,2) .. controls (2.6, 2) and (2.4,2) .. (2.2,2.2); 
      \end{scope}

      \draw (2, -1) node {\footnotesize (d)} (4,-1); 
\end{tikzpicture}

  \caption{$\lureg{v}$ can be thought of as a transformation of
    $\reg{v}$ by altering select constraints. Pictures (a) and (b)
    handle the case when $v_x >U_x$; pictures (c) and (d) handle the
    case when $v_y > L_y$}
  \label{fig:lu-region-dist-graph}
\end{figure}

These examples lead us to the definition of the distance graph $\gvlu$
for $\lureg{v}$ as a transformation of the distance graph $\gv$.

\begin{definition}[Distance graph $\gvlu$]\label{def:graph_of_LU-R}
  Let $v$ be valuation. Given the distance graph of the region
  $\reg{v}$ in canonical form $\gv = (\lleq_{xy}, c_{xy})_{x,y \in
    X}$, first define the distance graph $G' = (\lleq'_{xy},
  c'_{xy})_{x,y \in X}$ as follows:
  \begin{equation*}
    (\lleq'_{yx}, c'_{yx}) = \begin{cases} 
      (<, \infty) & \text{ if } v_x > U_x \\
      (<, \infty) & \text{ if } v_y > L_y \text{ and } x \neq
      0 \\
      (<, -L_y) & \text{ if } v_y > L_y \text{ and } x = 0 \\ 
      (\lleq_{yx}, c_{yx}) & \text{ otherwise} 
    \end{cases}
  \end{equation*}
  Then, the distance graph $\gvlu$ is defined to be the canonical form
  of $G'$.
\end{definition}

The following lemma confirms that the distance graph defined above
indeed represents $\lureg{v}$.

\begin{lemma}\label{lem:graph_of_alui}
  Let $v$ be a valuation. Let $\gvlu$ be the graph obtained by
  Definition~\ref{def:graph_of_LU-R}. Then the sets $\lureg{v}$ and
  $\sem{\gvlu}$ are equal.
\end{lemma}
\begin{proof}
  Let $v' \in \lureg{v}$.  We will now show that $v' \in
  \sem{\gvlu}$. First observe that $\sem{\gvlu} = \sem{G'}$ where $G'$
  is the graph as in Definition \ref{def:graph_of_LU-R}. Therefore it
  is sufficient to show that $v'$ satisfies the constraints given by
  $G'$. From the definition, it is clear that an edge $y \xra{} x$ is
  finite valued in $G'$ only if $v_x \le U_x$. Additionally when $v_y
  \le L_y$, the value of the edge $y \xra{} x$ is the same as that in
  $\gv$. Otherwise if $v_y > L_y$, the only finite value is $(<,-L_y)$
  for the edge $y \xra{} x_0$.

  Since $v' \in \lureg{v}$, it satisfies all the $LU$-guards that
  $v$ satisfies.
  If $y$ is a clock such that $v_y > L_y$ then $v'_y > L_y$ too.  So
  $v'$ satisfies constraints of the form $y \xra{<-L_y} x_0$. It now
  remains to look at edges $y \xra{\lleq d} x$ with $v_y \le L_y$,
  $v_x \le U_x$ and the weight $(\lleq, d)$ coming from $\gv$. Let
  $\intpart{v_x}$ and $\intpart{v_y}$ be denoted as $c_x$ and $c_y$
  respectively. As $v'$ satisfies the same $LU$-guards as $v$, we
  have:
  \begin{align}
    \label{eqn:hypothesis-vxp-vyp}
    \begin{split}
    v'_x & < c_x + 1 \\
    v'_y & \ge c_y
    \end{split}
  \end{align}
  Therefore $v'_x - v'_y < c_x + 1 - c_y$. Since $\gv$ represents the
  region containing $v$, by definition of regions, the constant in the
  weight $(\lleq, d)$ is either $c_x - c_y + 1$ or $c_x - c_y$. If it
  is the former, then clearly, $v'$ also satisfies $x - y \lleq d$.
  We need to consider the latter case, that is, $d$ is $c_x - c_y$.
  Thanks to~\eqref{eqn:hypothesis-vxp-vyp} above, if either
  $v'_x < c_x$ or $v'_y \ge c_y+1$, we are done. We are left with
  considering the case when $\intpart{v'_x} = c_x$ and
  $\intpart{v'_y} = c_y$. We have:
  \begin{align*}
    && v_x - v_y & \lleq c_x - c_y \\
    & \imp  & \fracpart{v_x} - \fracpart{v_y} & \lleq 0 \\
    & \imp & \fracpart{v'_x} - \fracpart{v'_y} & \lleq 0 \qquad
    (\text{as } v' \in \erlu{v}) \\
    & \imp &\intpart{v'_x} + \fracpart{v'_x} - (\intpart{v'_y}
    + \fracpart{v'_y}) & \lleq c_x - c_y \\
    & \imp & v'_x - v'_y & \lleq d
  \end{align*}
  This proves that if $v' \in \lureg{v}$, then $v'$ satisfies the
  constraints of $G'$ and hence $v' \in \sem{G'}$.

  Now for the other direction, assume $v' \in \sem{G'}$. We will show
  that $v' \in \lureg{v}$. Let $x,y$ be clocks such that $v_x \le U_x$
  and $v_y \le L_y$. From the definition of $\sem{G'}$, edges of the
  form $y \xra{} x_0$ and $x_0 \xra{} x$ are retained as in
  $\gv$. Since $v' \in \sem{G'}$, it is clear that $v'$ satisfies the
  same $LU$-guards as $v$. We now consider the order property for
  $LU$-regions. From Definition~\ref{def:lu-region}, the order
  property only need to be considered when
  $\intpart{v'_x} = \intpart{v_x}$ and
  $\intpart{v'_y} = \intpart{v_y}$.  Let us denote
  $c_x = \intpart{v'_x} = \intpart{v_x}$ and
  $c_y = \intpart{v'_y} = \intpart{v_y}$ By definition of $G'$, the
  edge $y \xra{} x$ in $G'$ has the same weight as that in $\gv$. Let
  the edge weight $y \xra{} x$ be $(\lleq, d)$. We have:
  \begin{align*}
    & &v_x - v_y & \lleq d \\
    &\imp & \fracpart{v_x} - \fracpart{v_y} & \lleq d - (c_x - c_y)
  \end{align*}
  If $\fracpart{v_x} < \fracpart{v_y}$ then either $d - (c_x - c_y) <
  0 $ or if it is $0$ then $\lleq$ is the strict inequality. As this
  edge remains in $G'$, the valuation $v'$ satisfies $v'_x - v'_y
  \lleq d$. Moreover, since the integral parts of $v'$ match, we get
  $\fracpart{v'_x} - \fracpart{v'_y} \lleq d - (c_x - c_y)$. By the
  aforementioned property, we get $\fracpart{v'_x} <
  \fracpart{v'_y}$. A similar argument follows for the case when
  $\fracpart{v_x} = \fracpart{v_y}$.
\end{proof}

Before we use the distance graph $\gvlu$ for further analysis, recall
that we first defined a graph $G'$ in
Definition~\ref{def:graph_of_LU-R} and then obtained $\gvlu$ by
canonicalizing it. We will now observe some properties of $\gvlu$ that
are either retained from $G'$ or obtained thanks to
canonicalization. These observations would be important in the next
section when we do the analysis on the distance graph representing
LU-region $\erlu{v}$ and zone $Z'$.

\begin{lemma}\label{lem:graph-G_v}
  Let $v$ be a valuation. Let $\gv$, $\gvlu$ be the canonical distance
  graphs of $\reg{v}$ and $\lureg{v}$ respectively. For variables
  $x,y$, if the edge $y \xra{} x$ has a finite value in $\gvlu$,
  then:
  \begin{enumerate}
  \item $v_x \le U_x$,
  \item if $v_y \le L_y$, the value of $y \xra{} x$ in $\gvlu$ and
    $\gv$ are equal,
  \item if $v_y > L_y$, the value of $y \xra{} x$ in $\gvlu$ equals
    the value of the path $y \xra{} x_0 \xra{} x$ in $\gvlu$.
  \end{enumerate}
\end{lemma}
\begin{proof}
  The graph $\gvlu$ is the canonical form of the graph $G'$ defined in
  Definition \ref{def:graph_of_LU-R}. By definition, if $v_x > U_x$,
  all incoming edges to $x$ in $G'$ have weight $(<,\infty)$. So, the
  shortest path in this graph $G'$ from a variable $y$ to a variable
  $x$ such that $v_x > U_x$ is $(<,\infty)$. Therefore, if in the
  canonical form $\gvlu$, the edge $y \xra{} x$ is finite valued, we
  should have $v_x \le U_x$. This gives the first part of the lemma.

  Consider the second part of the lemma. We know that $v_y \le L_y$
  and from the first part of the lemma, we know that $v_x \le U_x$.
  The weight of $y \xra{} x$ in $G'$ is the same as that of $\gv$
  according to Definition \ref{def:graph_of_LU-R}. Note that the
  finite values in the graph $G'$ are either the same as that of $\gv$
  or of the form $(<, -L_z)$ for some edges $z \xra{} 0$. In the
  latter case, we also know by definition that $v_z > L_z$.  Therefore
  the value $(<, -L_z)$ is greater than the corresponding value in
  $\gv$. As $\gv$ is canonical, the shortest path from $y$ to $x$ in
  $G'$ cannot reduce from its value in $\gv$ and hence equals just the
  edge value $y \xra{} x$. This gives the second part of the lemma.

  Finally consider the third part. Assume $v_y > L_y$. From Part 1, we
  know that $v_x \le U_x$. By Definition \ref{def:graph_of_LU-R}, the
  weight of $y \xra{} x$ equals $(<,\infty)$ if $x$ is not $x_0$. The
  only finite valued outgoing edge from $y$ is $y \xra{}
  x_0$. Therefore, we can infer two things: the shortest path from $y$
  to $x_0$ is given by the edge $y \xra{} x_0$; and the shortest path
  from $y$ to $x$ should contain this edge $y \xra{} x_0$. Secondly,
  note that variable $x_0$ has $v_{x_0} \le L_{x_0}$ ($v_{x_0} = 0 =
  L_{x_0}$). By definition, the value of $x_0 \xra{} x$ in $G'$ is
  given by the corresponding value in $\gv$ and by Part 2, we know
  that this value stays in $\gvlu$, that is, the shortest path from
  $x_0$ to $x$ in $G'$ is given by the direct edge. Summing up, the
  shortest path from $y$ to $x$ in $G'$ is given by $y \xra{} x_0
  \xra{} x$, where both $y\xra{} x_0$ and $x_0 \xra{} x$ are values
  coming from $\gvlu$.
\end{proof}

\subsection{When does an $LU$-region intersect a zone.}
\label{sec:when-does-an}

We are now in a position to characterize the intersection $\erlu{v}
\cap Z'$. Let $\gvlu$ as defined in the previous section be the
canonical distance graph of $\erlu{v}$ and let $G_{Z'}$ be the
canonical distance graph of $Z'$. By Proposition \ref{prop:cycles},
the intersection $\erlu{v} \cap Z'$ is empty iff $\min(\gvlu, G_{Z'})$
has a negative cycle.

We will now state a necessary and sufficient condition for the distance graph
$\min(\gvlu, G_{Z'})$ to have a negative cycle. We denote by $Z'_{xy}$
the weight of the edge $x \xra{} y$ in $G_{Z'}$. Similarly we denote
$\lureg{v}_{xy}$ for the weight of $x \xra{} y$ in $\gvlu$. When a
variable $x$ represents the special clock $x_0$, we define
$\lureg{v}_{0x}$ and $\lureg{v}_{x0}$ to be $(\le, 0)$. Since by
convention $x_0$ is always $0$, this is consistent.
We also denote $\reg{v}_{xy}$ for the weight of $x \xra{} y$ in
$\gv$ with the same convention when $x=x_0$.

The next proposition is the most important observation used in getting
the final inclusion test.
\begin{proposition}\label{prop:neg-cycles-in-min}
  Let $v$ be a valuation and $Z'$ a zone.  The intersection $\lureg{v}
  \cap Z'$ is empty iff there exist two variables $x,y$ such that $v_x
  \le U_x$ and $Z'_{xy} + \lureg{v}_{yx} < (\le,0)$.
\end{proposition}

To prove the above proposition, we need a small but a crucial
observation that exploits the special structure of regions. A variable
$x$ is said to be bounded in valuation $v$ if $v_x \le \max(L_x,
U_x)$. If $x,y$ are bounded in $v$, then the projection of the region
$\reg{v}$ onto $x,y$ has very specific boundaries. Call it a bounded
region. The following lemma makes use of the fact that a bounded
region is either fully contained in a zone or is totally disjoint from
it, that is, there cannot be a partial intersection of the bounded
region and zone, as illustrated in Figure~\ref{fig:bounded-regions}.

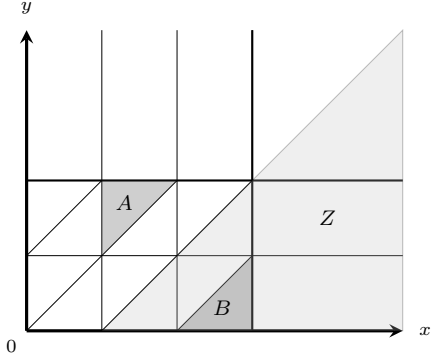
\begin{figure}
  \centering \begin{tikzpicture}

    
  \begin{scope}[very thin]
    \foreach \x in {1,...,3}
    {
      \draw (\x,0) -- (\x,4);
      \draw (\x,0) -- (3, 3- \x);
      }
      \draw (0,0) -- (2,2);
      \draw (0,1) -- (1,2);
      \draw (0,1) -- (5,1);
    \end{scope}

      \draw[thick] (3,0) -- (3,4); \draw[thick] (0,2) -- (5,2);

      \draw[->, very thick, >=stealth] (0,0) -- (0,4); \draw[->, very
      thick, >=stealth] (0,0) -- (5,0);

      \draw (5.3,0) node {\scriptsize $x$} (6, 0); \draw (0, 4.3) node
      {\scriptsize $y$} (0,4.3); \draw (-0.2,-0.2) node {\scriptsize
        $0$} (0,-0.2);

      
      \draw[fill=gray!50!white, nearly transparent] 
      (1,0) -- (5,4) -- (5,0) -- cycle; 
      \draw (4, 1.5) node {\footnotesize $Z$} (4.5, 1.5);

      \fill[gray!50!black, nearly transparent] (1,1) -- (2,2) -- (1,2) --
      cycle; 
      \draw (1.3, 1.7) node {\footnotesize $A$} (2,1.2);
      \fill[gray!50!black, nearly transparent] (2,0) -- (3,1) -- (3,0) --
      cycle; 
      \draw (2.6, 0.3) node {\footnotesize $B$} (3,0.2);
\end{tikzpicture}

  \caption{A bounded region is either totally contained in a zone $Z$
    or totally disjoint from it. It can never intersect partially. For
    example, the bounded region B is totally inside $Z$, however the
    bounded region $A$ is totally disjoint from $Z$.}
  \label{fig:bounded-regions}
\end{figure}

\begin{lemma}\label{lem:bounded-vars}
  Let $x$, $y$ be bounded variables of $v$ appearing in some negative
  cycle $N$ of $\min(\gvlu, G_{Z'})$.  Let the edge weights be $x
  \xra{\lleq_{xy} c_{xy}} y$ and $y \xra{\lleq_{yx} c_{yx}} x$ in
  $\gv$.  If the value of the path $x \xra{} \dots \xra{} y$ in $N$ is
  strictly less than $(\lleq_{xy},c_{xy})$, then $x \xra{} \dots
  \xra{} y \xra{\lleq_{yx} c_{yx}} x$ is a negative cycle.
\end{lemma}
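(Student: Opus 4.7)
The plan is to prove that $W_{xy} + (\lleq_{yx}, c_{yx}) \le (<, 0)$, where $W_{xy}$ denotes the weight of the subpath $x \to \dots \to y$ inside $N$ named in the hypothesis; this sum is exactly the total weight of the claimed cycle $x \to \dots \to y \to x$.

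First I would analyse the canonical graph $G_R$ for the bounded clocks $x$ and $y$. In the region $R$, the value of $y - x$ is either a fixed integer (when $x, y$ are pinned to integer values, or both lie in unit intervals with equal fractional parts) or ranges over an open interval of length one. Reading this off from the canonical form, the round-trip weight $(\lleq_{xy}, c_{xy}) + (\lleq_{yx}, c_{yx})$ is therefore either $(\le, 0)$ in the constant case or $(<, 1)$ in the open-interval case.

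Second, every DBM weight we are manipulating has an integer constant, so the strict inequality $W_{xy} < (\lleq_{xy}, c_{xy})$ forces $W_{xy}$ to sit at or below the immediate predecessor of $(\lleq_{xy}, c_{xy})$ in the weight order: if $(\lleq_{xy}, c_{xy}) = (\le, v)$ then $W_{xy} \le (<, v)$, and if $(\lleq_{xy}, c_{xy}) = (<, a+1)$ then $W_{xy} \le (\le, a)$. Adding $(\lleq_{yx}, c_{yx})$ in the respective cases yields at most $(<, v) + (\le, -v) = (<, 0)$ and $(\le, a) + (<, -a) = (<, 0)$, so in either case $W_{xy} + (\lleq_{yx}, c_{yx}) \le (<, 0)$, as required.

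The principal obstacle is recognising the two-case classification of round-trip weights coming from the region geometry; once that is in hand, the remainder is routine arithmetic in the DBM semiring. It is worth noting that the negativity of $N$ is not actually consumed by this argument: $N$ only supplies the subpath $x \to \dots \to y$, and the strict comparison with $(\lleq_{xy}, c_{xy})$ is what really does the work. This explains why the conclusion remains valid even when the ``other half'' $y \to \dots \to x$ of $N$ happens to be shorter than the direct $G_R$-edge $(\lleq_{yx}, c_{yx})$.
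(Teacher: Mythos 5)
Your proposal is correct and follows essentially the same route as the paper: both arguments split into the two cases where $y-x$ is pinned to an integer in $R$ (round trip $(\le,0)$) or ranges over an open unit interval (round trip $(<,1)$), and then conclude by the integrality of the weights, exactly as in your ``immediate predecessor'' step. Your observation that the negativity of $N$ is never used is also consistent with the paper's proof, which likewise relies only on the existence of the subpath and the strict comparison with $(\lleq_{xy},c_{xy})$.
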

\begin{proof}
  Let the path $x \xra{} \dots \xra{} y$ in $N$ have weight $(\lleq,
  c)$. Now, since $x$ and $y$ are bounded variables in $v$, we can
  have either $y - x = d$ or $d-1 < y-x < d$ for some integer $d$ in
  $\gv$.

  In the first case, we have edges $x \xra{\le d} y$ and $y \xra{\le
    -d} x$ in $\gv$, that is $(\lleq_{xy}, c_{xy}) = (\le, d)$ and
  $(\lleq_{yx}, c_{yx}) = (\le, -d)$. Since by hypothesis $(\lleq,c)$
  is strictly less than $(\le, d)$, we have either $c < d$ or $c = d$
  and $\lleq$ is the strict inequality. Hence $(\lleq,c) + (\le, -d) <
  (\le, 0)$ showing that $x \xra{} \dots \xra{} y \xra{\lleq_{yx}
    c_{yx}} x$ is a negative cycle.

  In the second case, we have edges $x \xra{< d} y$ and $y \xra{ < -d
    + 1} x$ in $\gv$, that is, $(\lleq_{xy}, c_{xy}) = (<,d)$ and
  $(\lleq_{yx}, c_{yx}) = (<,-d)$. Here $c < d$ and again $x \xra{}
  \dots \xra{} y \xra{\lleq_{yx} c_{yx}} x$ gives a negative cycle.
\end{proof}

We can now prove Proposition~\ref{prop:neg-cycles-in-min}.

\subsubsection*{Proof of Proposition \ref{prop:neg-cycles-in-min}}
The distance graph $\min(\gvlu, G_{Z'})$ represents the set $\lureg{v}
\cap Z'$. By Proposition \ref{prop:cycles}, the intersection is empty
iff $\min(\gvlu, G_{Z'})$ has a negative cycle. If there exist
variables $x,y$ such that $Z'_{xy} + \lureg{v}_{yx} < (\le, 0)$, then
there is a negative cycle $x \xra{} y \xra{} x$ in $\min(\gvlu,
G_{Z'})$ and hence $\lureg{v} \cap Z'$ is empty. This shows the
right-to-left direction.

The left-to-right direction is less trivial. Assume that $\lureg{v}
\cap Z'$ is empty. Then, there is a negative cycle $N$ in $\min(\gvlu,
G_{Z'})$. To prove the proposition, we aim to show the following.

\paragraph{Aim} We show that the negative cycle $N$ of $\min(\gvlu,
G_{Z'})$ can be reduced to the form:
\begin{equation}
  \label{eq:aim}
  x \xra{~Z'_{xy}~} y  \xra{~\lureg{v}_{yx}~} x
\end{equation}

Firstly, since both $\gvlu$ and $G_{Z'}$ are canonical, we can assume
without loss of generality that no two consecutive edges in $N$ come
from the same graph.

Suppose there are two edges $y_1 \xra{} x_1$ and $y_2 \xra{} x_2$ in
$N$ with weights coming from $\gvlu$:
\begin{align}
  \label{eq:two-edges}
  y_1 \xra{~~\lureg{v}_{y_1x_1}~~} x_1 \xra{} \cdots \xra{} y_2
  \xra{~~\lureg{v}_{y_2x_2}~~} x_2 \xra{} \cdots \xra{} y_1
\end{align}
Since they are part of a negative cycle, their edge weights should be
a finite value and by Part 1 of Lemma \ref{lem:graph-G_v}, this means:
\begin{align*}
  v_{x_1} \le U_{x_1} \text{ and } v_{x_2} \le U_{x_2}
\end{align*}

\paragraph{1. Suppose $v_{y_1} \le L_{y_1}$ and $v_{y_2} \le L_{y_2}$}
By Part 2 of Lemma \ref{lem:graph-G_v}, the edge values $y_1 \xra{}
x_1$ and $y_2 \xra{} x_2$ are the same as in $\gv$.  Consider the
edge:
\begin{align*}
  \text{ $y_1 \xra{} x_2$ in $\gvlu$ }
\end{align*}
Again, from the same lemma, this edge value comes from $\gv$ too.

If the value of this edge $y_1 \xra{} x_2$ is smaller than the value
of the path $y_1 \xra{} x_1 \xra{} \cdots \xra{} y_2 \xra{} x_2$ in
$N$, then this path can be replaced by the single edge $y_1 \xra{}
x_2$ to get a smaller negative cycle in $\min(\gvlu, G_Z)$.

However, if the value of the path $y_1 \xra{} x_1 \xra{} \cdots \xra{}
y_2 \xra{} x_2$ is less than the edge value $y_1 \xra{} x_2$, then by
Lemma \ref{lem:bounded-vars}:
\begin{align*}
  \text{ $y_1 \xra{} x_1 \xra{} \dots \xra{} y_2
    \xra{} x_2 \xra{}
    y_1$, where $x_2 \xra{} y_1$ comes from $\gv$ }
\end{align*}
is a negative cycle. The edge $x_2 \xra{} y_1$ might be infinity in
$\gvlu$. But as $\gv$ is canonical, we can replace $y_2 \xra{} x_2
\xra{} y_1 \xra{} x_1$ by $y_2 \xra{} x_1$. From Lemma
\ref{lem:graph-G_v}, this edge is retained in $\gvlu$ and hence we get
a smaller negative cycle.

Therefore in this case, we can eliminate the two edges $y_1 \xra{}
x_1$ and $y_2 \xra{} x_2$ to get a smaller negative cycle containing
either $y_1 \xra{} x_2$ or $y_2 \xra{} x_1$. If $N$ does not contain a
variable $z$ such that $v_z > L_z$, this elimination can be repeatedly
applied and $N$ can be reduced to a negative cycle of the form $y
\xra{} x \xra{} y$ with $v_y \le L_y$, $v_x \le U_x$ and the edge
weights $y \xra{} x$ coming from $\gvlu$ and $x \xra{} y$ coming from
$G_{Z'}$, exactly as required by (\ref{eq:aim}).

\paragraph{2. Suppose $v_{y_1} > L_{y_1}$}
Consider again the two edges $y_1 \xra{} x_1$ and $y_2 \xra{} x_2$ of
(\ref{eq:two-edges}) and now suppose that $v_{y_1} > L_{y_1}$. By Part
3 of Lemma \ref{lem:graph-G_v}, the edge $y_1 \xra{} x_1$ can be
replaced by:
\begin{align*}
  \text{$y_1 \xra{} x_0 \xra{} x_1$ of $\gvlu$}
\end{align*}
If there is another variable in $N$ that is greater than its $L$
bound, then the vertex $x_0$ would occur twice in the negative
cycle. From this negative cycle, we can obtain a smaller negative
cycle containing only one occurrence of $x_0$. Hence, without loss of
generality, we can assume that $x_0$ occurs only once in $N$. In
particular, this gives us that:
\begin{align*}
  \text{$v_{y_2} \le L_{y_2}$}
\end{align*}

Note that the special variable $x_0$ has $v_{x_0} \le L_{x_0}$ as its
value is always supposed to be $0$ and $L_{x_0}$ is defined to be
$0$. Now consider the two edges:
\begin{align*}
  x_0 \xra{} x_1 \text{ and } y_2 \xra{} x_2
\end{align*}
This corresponds to Case 1 as $v_{x_0} \le L_{x_0}$ and $v_{y_2} \le
L_{y_2}$. As we have seen, these two edges can be eliminated to give a
smaller negative cycle containing either $x_0 \xra{} x_2$ or $y_2
\xra{} x_1$, with the respective value coming from $\gvlu$.
 
If it is the latter edge $y_2 \xra{} x_1$, the smaller negative cycle
does not contain $y_1$ and hence all variables are bounded by $L$. By
Case 1, it can be reduced to a cycle as required by the proposition.

Let us now consider the former edge $x_0 \xra{} x_2$. We have the
cycle:
\begin{align*}
  \text{$y_1 \xra{} x_0 \xra{} x_2 \xra{} \dots \xra{} y_1$}
\end{align*}
All the variables other than $y_1$ in the path $x_0 \xra{} \dots
\xra{} y_1$ are bounded by their $L$ bound. We can therefore assume
that all edges in $x_2 \xra{} \dots \xra{} y_1$ come from $G_{Z'}$,
because if not, we can apply the argument of Case 1 to further reduce
the cycle. As $G_{Z'}$ and $\gvlu$ are canonical, this cycle reduces
to $y_1 \xra{} x_2 \xra{} y_1$ with $y_1 \xra{} x_2$ coming from
$\gvlu$ and $x_2 \xra{} y_1$ coming from $Z'$. This again conforms to
the form of the cycle required by (\ref{eq:aim}).  \koniec

\subsection{Final steps}
\label{sec:final-step}

Proposition~\ref{prop:neg-cycles-in-min} gives a useful
characterization of when $\lureg{v} \cap Z'$ is empty. To lift this
characterization to $Z \not \incl \alu(Z')$ and
Proposition~\ref{prop:incl-int}, we need to find the least value of
$\lureg{v}_{yx}$ from among the valuations $v \in Z$ and see if this
satisfies the condition given in
Proposition~\ref{prop:neg-cycles-in-min}.

For the moment, assume that $L=U$ so that the LU-regions coincide with
the classic regions. Consider a zone $Z$ lying within the $M$ bounds,
as shown in Figure~\ref{fig:final-step}. The values of $\reg{v}_{yx}$
for different valuations in the zone are shown. The value decreases as
we move towards the ``left boundary''. In the figure, since the
$y - x$ constraint of $Z$ is given by $y - x \le -1$, there exists a
valuation $v_4$ on the ``boundary'' and hence the least value of
$\reg{v}_{yx}$ among all $v \in Z$ would be given by $\reg{v_4}_{yx}$
which is $(\le, 1)$. If the $y -x $ constraint of $Z$ was $y - x < -1$
with a strict inequality, then the least value of $\reg{v}_{yx}$ from
$v \in Z$ is no longer $(\le, 1)$ as there is no valuation attaining
this value. In this case, the least value of $\reg{v}_{yx}$ would be
$(<, 2)$, given by the open region containing $v_5$ with constraints
$y - x <-1$ and $x - y < 2$.

\begin{figure}[!t]
  \centering \begin{tikzpicture}



      \draw[->, very thick, >=stealth] (0,0) -- (0,6); \draw[->, very
      thick, >=stealth] (0,0) -- (8,0);
      \draw (3, -1.5) node {\footnotesize (a) } (3.5, -1.5); 

      \node at (8.3,-0.3) {\scriptsize $x$};
      \node at (-0.3, 6.3) {\scriptsize $y$}; 
      
      \draw (-0.2,-0.2) node {\scriptsize
       $0$} (0,-0.2);

     \draw (0, 5) -- (8,5);
     \draw (7, 0) -- (7, 6);
     \node [left] at (0,5) {\scriptsize $M_y$};
     \node [below] at (7, 0) {\scriptsize $M_x$};
      
      \begin{scope}[xshift=1cm]
        \fill[gray!50!white, draw=black] (2,2) -- (2,1) -- (4,1) --
        (5,2) --(5,4) -- (4,4) -- cycle;

        \draw[thick] (0,0) -- (4,4);
        \fill (3,3) circle (2pt);
        \node at (3.2, 2.8) {\scriptsize $v_4$};
        \draw[thick] (1,0) -- (5,4); 
        \fill (3.5,2.5) circle (2pt);
        \node at (3.7, 2.3) {\scriptsize $v_3$};
        \draw[thick] (2,0) -- (5,3);
        \fill (4, 2) circle (2pt);
        \node at (4.2, 1.8) {\scriptsize $v_2$};
        \draw[thick] (3,0) -- (5,2);
        \fill (4.5, 1.5) circle (2pt);
        \node at (4.7, 1.3) {\scriptsize $v_1$};
        \fill (3.7, 3.2) circle (2pt);
        \node [above] at (3.9, 3.2) {\scriptsize $v_5$};
        
        \draw[white] (1.8,-1.2) -- node[black,sloped]
        {\scriptsize $\reg{v_1}_{yx} = 4$} (3,0);
        \draw[white] (0.8,-1.2) --
        node[black, sloped] {\scriptsize $\reg{v_2}_{yx} = 3 $} (2,0);
        \draw[white] (-0.2,-1.2) -- node[black, sloped]
        {\scriptsize $\reg{v_3}_{yx} = 2$} (1,0); 
        \draw[white] (-1.2,-1.2) --
        node[black, sloped] {\scriptsize $\reg{v_4}_{yx} = 1$} (0,0);
        \draw[white] (3, 3.5) -- node[black, sloped] {\scriptsize $y
        - x \le -1$} (4, 4.5);
      \end{scope}
\end{tikzpicture}

  \caption{Value of $\reg{v}_{yx}$ decreases as we move left}
  \label{fig:final-step}
\end{figure}
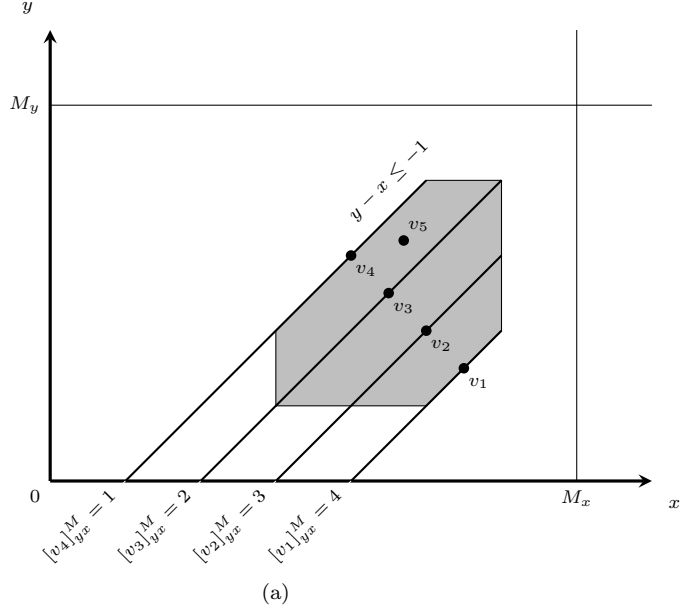

Due to this asymmetry, we need to define the following notion to
handle weights in a convenient way. For a weight $(\lleq, c)$ we
define $-(\lleq, c)$ as $(\lleq, -c)$ and a \emph{ceiling}
function $\ceil{\cdot}$ as follows.

\begin{definition}
  For a real $c$, let $\ceil{c}$ denote the smallest integer that is
  greater than or equal to $c$. We define the \emph{ceiling} function
  $\ceil{(\lleq, c)}$ for a weight $(\lleq, c)$ depending on whether
  $\lleq$ equals $\le$ or $<$, as follows:
 
\begin{equation*}
  \ceil{(\le,c)} = \begin{cases} (\le, c) & \text{if } c \text{ is an
      integer} \\
    (<, \ceil{c}) & \text{otherwise } \end{cases}
\end{equation*}

\begin{equation*}
  \ceil{(<,c)} = \begin{cases} (<, c+1) & \text{if } c \text{ is an
      integer} \\
    (<, \ceil{c}) & \text{otherwise } \end{cases}
\end{equation*}

\end{definition}

The following proposition is one of the two cores for the proof of the main
theorem. It gives the least value of $\lureg{v}_{yx}$ from among the
valuations $v$ present in a zone $Z$.

\begin{proposition}\label{prop:minimum-val-of-xy}
  Let $Z$ be a non-empty zone, and let $x$, $y$ be two clocks. The least
  value of $\lureg{v}_{yx}$ among the valuations $v \in Z$ such that
  $v_x\leq U_x$ is given by:
  \begin{equation*}
    \begin{cases}
      (<, \infty) & \text{ if } Z_{x0} < (\le, -U_x) \\
      \max\{\ceil{-Z_{xy}}, \ceil{-Z_{x0}} - (<, L_y) \} & \text{
        otherwise}
    \end{cases}
  \end{equation*}  
\end{proposition}
\begin{proof}
  Let $G$ be the canonical distance graph representing the zone
  $Z$. We denote the weight of an edge $i \xra{} j$ in $G$ by
  $(\lleq_{ij}, c_{ij})$. Recall that this means $Z_{ij} =
  (\lleq_{ij}, c_{ij})$.

  We are interested in computing the smallest value of the $x-y$
  constraint defining an $LU$-region intersecting $Z$. Additionally we
  want to restrict to $LU$-regions in which all its valuations satisfy
  $x \le U_x$, that is, we need to find:
  \begin{equation*}
    \b:= \min\{\lureg{v}_{yx} ~|~ v \in Z~\text{and}~v_x\le U_x \}
  \end{equation*}
  Clearly, if $v_x > U_x$ for all valuations $v \in
  Z$, then $\b$ is $(<,\infty)$. When $Z_{x0} < (\le, -U_x)$, it means
  that all valuations $v \in Z$ satisfy $0 - v_x \lleq_{x0} c_{x0}$
  and $c_{x0} \le -U_x$. Moreover $\lleq_{x0}$ is the strict
  inequality if $c_{x0} = -U_x$. In consequence, all valuations $v \in
  Z$ satisfy $v_x > U_x$ when $Z_{x0} < (\le, - U_x)$. Whence $\b =
  (<, \infty)$. This corresponds to the first case in the statement of
  the lemma.

  Let now restrict to the case when $Z_{x0} \ge (\le, -U_x)$.  By
  definition of regions (cf. Definition \ref{def:regions}) and Lemma
  \ref{lem:graph-G_v}, we have for a valuation $v$:
  \begin{equation}\label{xy-value}
    \lureg{v}_{yx} = \begin{cases}
      \ceil{(\le, v_x - v_y)} & \text{if } v_x \le U_x \text{ and } v_y \le L_y \\
      (<,-L_y) + \ceil{(\le, v_x)} & \text{if } v_x \le U_x \text{
        and } v_y> L_y
    \end{cases}
  \end{equation}

  Let $G'$ be the graph in which the edge $0 \xra{} x$ has weight
  $\min\{ (\le, U_x), (\lleq_{0x}, c_{0x})\}$ and the rest of the
  edges are the same as that of $G$. This graph $G'$ represents the
  valuations of $Z$ that have $v_x \le U_x$:
  $\sem{G'} = \{ v \in Z ~|~ v_x\le U_x\}$. We show that this set is
  not empty. For this we check that $G'$ does not have negative
  cycles. Since $G$ does not have negative cycles, every negative
  cycle in $G'$ should include the newly modified edge $0 \xra{} x$.
  Note that the shortest path value from $x$ to $0$ does not change
  due to this modified edge. So the only possible negative cycle in
  $G'$ is $0 \xra{} x \xra{} 0$. But then we are considering the case
  when $Z_{x0} \ge (\le, -U_x)$, and so
  $Z_{x0} + (\le, U_x) \ge (\le,0)$. Hence this cycle cannot be
  negative either. In consequence all the cycles in $G'$ are positive
  and $\sem{G'}$ is not empty.

  To find $\b$, it is sufficient to consider only the valuations in
  $\sem{G'}$. As seen from Equation \ref{xy-value}, among the
  valuations in $\sem{G'}$, we need to differentiate between those
  with $v_y \le L_y$ and the ones with $v_y > L_y$.  We proceed as
  follows. We first compute $\min\{ \lureg{v}_{yx}~|~v \in \sem{G'}
  \text{ and } v_y \le L_y\}$. Call this $\b_1$. Next, we compute
  $\min\{[v]_{yx}~|~v \in \sem{G'} \text{ and } v_y > L_y\}$ and set
  this as $\b_2$. Our required value $\b$ would then equal
  $\min\{\b_1, \b_2\}$.

  To compute $\b_1$, consider the following distance graph $G'_1$
  which is obtained from $G'$ by just changing the edge $0 \xra{} y$
  to $\min\{ (\le, L_y), (\lleq_{0y}, c_{0y}) \}$ and keeping the
  remaining edges the same as in $G'$. The set of valuations
  $\sem{G'_1}$ equals $\{ v \in \sem{G'} ~|~ v_y \le L_y \}$.
  If $\sem{G'_1} = \es$, we set $\b_1$ to $(<, \infty)$ and proceed to
  calculate $\b_2$. If not, we see that from Equation \ref{xy-value},
  for every $v \in \sem{G'_1}$, $[v]_{yx}$ is given by $\ceil{(\le,
    v_x - v_y)}$. Let $(\lleq_1,w_1)$ be the shortest path from $x$ to
  $y$ in the graph $G'_1$. Then, we have for all $v \in \sem{G'_1}$,
  $v_y - v_x \lleq_1 w_1$. If $\lleq_1$ is $\le$, then the least value
  of $[v]_{yx}$ would be $(\le, -w_1)$ and if $\lleq_1$ is $<$, one
  can see that the least value of $[v]_{yx}$ is $(<, -w_1+1)$. This
  shows that $\b_1 = \ceil{(\lleq_1, -w_1)}$. It now remains to
  calculate $(\lleq_1, w_1)$.

  Recall that $G'_1$ has the same edges as in $G$ except possibly
  different edges $0 \xra{} x$ and $0 \xra{} y$. If the shortest path
  from $x$ to $y$ has changed in $G'_1$, then clearly it should be due
  to one of the above two edges. However note that the edge $0 \xra{}
  x$ cannot belong to the shortest path from $x$ to $y$ since it would
  contain a cycle $x \xra{} \dots 0 \xra{} x \xra{} \dots y$ that can
  be removed to give shorter path. Therefore, only the edge $0 \xra{}
  y$ can potentially yield a shorter path: $x \xra{} \dots \xra{} 0
  \xra{} y$. However, the shortest path from $x$ to $0$ in $G'_1$
  cannot change due to the added edges since that would form a cycle
  with $0$ and we know that all cycles in $G'_1$ are positive.
  Therefore the shortest path from $x$ to $0$ is the direct edge $x
  \xra{} 0$, and the shortest path from $x$ to $y$ is the minimum of
  the direct edge $x \xra{} y$ and the path $x \xra{} 0 \xra{} y$. We
  get: $(\lleq_1, w_1) = \min\{ (\lleq_{xy}, c_{xy}), (\lleq_{x0},
  c_{x0}) + (\le, L_y) \}$ which equals $\min\{ Z_{xy}, Z_{x0} + (\le,
  L_y)\}$. Finally, from the argument in the above two paragraphs, we
  get:

\begin{equation}\label{b1}
  \b_1 = \begin{cases}
    (<, \infty) & \text{ if } \sem{G'_1} = \es \\
    \ceil{-Z_{xy}} & \text{ if } \sem{G'_1} \neq \es \text{ and } Z_{xy}
    \le Z_{x0} + (\le, L_y) \\
    \ceil{-Z_{x0}} + (\le, -L_y) & \text{ if } \sem{G'_1} \neq \es
    \text{ and } Z_{xy} > Z_{x0} + (\le, L_y) \\
  \end{cases}
\end{equation}

We now proceed to compute $\b_2=\min\{ [v]_{yx} ~|~ v \in \sem{G'}
\text{ and } v_y > L_y \}$. Let $G'_2$ be the graph which is obtained
from $G'$ by modifying the edge $y \xra{} 0$ to $\min\{ Z_{y0},
(<,-L_y)\}$ and keeping the rest of the edges the same as in
$G'$. Clearly $\sem{G'_2} = \min\{ v \in \sem{G'}~|~ v_y >
L_y\}$.

Again, if $\sem{G'_2}$ is empty, we set $\b_2$ to
$(<,\infty)$. Otherwise, from Equation \ref{xy-value}, for each
valuation $v \in \sem{G'_2}$, the value of $[v]_{yx}$ is given by
$(<,\ceil{v_x} - L_y)$. For the minimum value, we need the least value
of $v_x$ from $v \in \sem{G'_2}$. Let $(\lleq_2, w_2)$ be the shortest
path from $x$ to $0$ in $G'_2$. Then, since $-v_x \lleq_2 w_2$, the
least value of $\ceil{v_x}$ would be $-w_2$ if $\lleq_2$ is $\le$ and
equal to $\ceil{-w_2}$ if $\lleq_2 = <$ and $\b_2$ would respectively
be $(<, -w_2 -L_y)$ or $(<, -w_2 + 1 -L_y)$. It now remains to
calculate $(\lleq_2, w_2)$.

Recall that $G'_2$ is $G$ with $0 \xra{} x$ and $y \xra{} 0$
modified. The shortest path from $x$ to $0$ cannot include the edge $0
\xra{} x$ since it would need to contain a cycle, for the same reasons
as in the $\b_1$ case.  So we get $(\lleq_2, w_2) = \min\{ Z_{x0},
Z_{xy} + (<,-L_y)\}$. If $Z_{x0} \le Z_{xy} + (<,-L_y)$, then we take
$(\lleq_2, w_2)$ as $Z_{x0}$, otherwise we take it to be $Z_{xy} + (<,
-L_y)$. So, we get $\b_2$ as the following:

\begin{equation}\label{b2}
  \b_2 = \begin{cases}
    (<, \infty) & \text{ if } \sem{G'_2} = \es \\
    -Z_{xy} + (<,1) & \text{ if } \sem{G'_2} \neq \es \text{ and } Z_{x0} \ge Z_{xy} + (<,-L_y) \\
    \ceil{-Z_{x0}} + (<, -L_y) & \text{ if } \sem{G'_2} \neq \es
    \text{ and } Z_{x0} < Z_{xy} + (<,
    -L_y) \\
  \end{cases}
\end{equation}
However, we would like to write $\b_2$ in terms of the cases used for
$\b_1$ in Equation \ref{b1} so that we can write $\b$, which equals
$\min\{ \b_1, \b_2\}$, conveniently.

Let $\p_1$ be the inequation: $Z_{xy} \le Z_{x0} + (\le, L_y)$. From
Equation \ref{b1}, note that $\b_1$ has been classified according to
$\p_1$ and $\neg \p_1$ when $\sem{G'_1}$ is not empty. Similarly, let
$\p_2$ be the inequation: $Z_{x0} \ge Z_{xy} + (<,-L_y)$. From
Equation \ref{b2} we see that $\b_2$ has been classified in terms of
$\p_2$ and $\neg \p_2$ when $\sem{G'_2}$ is not empty. Notice the
subtle difference between $\p_1$ and $\p_2$ in the weight component
involving $L_y$: in the former the inequality associated with $L_y$ is
$\le$ and in the latter it is $<$. This necessitates a bit more of
analysis before we can write $\b_2$ in terms of $\p_1$ and $\neg
\p_1$.

Suppose $\p_1$ is true. So we have $(\lleq_{xy},c_{xy}) \le
(\lleq_{x0}, c_{x0} + L_y)$. This implies: $c_{xy} \le c_{x0} +
L_y$. Therefore, $c_{x0} \ge c_{xy} - L_y$. When $c_{x0} > c_{xy} -
L_y$, $\p_2$ is clearly true. For the case when $c_{x0}= c_{xy} -
L_y$, note that in $\p_2$ the right hand side is always of the form
$(<, c_{xy} - L_y)$, irrespective of the inequality in $Z_{xy}$ and so
yet again, $\p_2$ is true. We have thus shown that $\p_1$ implies
$\p_2$. 

Suppose $\neg \p_1$ is true.  We have $(\lleq_{xy}, c_{xy}) >
(\lleq_{x0}, c_{x0} + L_y)$. If $c_{xy} > c_{x0} + L_y$, then clearly
$c_{x0} < c_{xy} - L_y$ implying that $\neg \p_2$ holds. If $c_{xy} =
c_{x0} + L_y$, then we need to have $\lleq_{xy}$ equal to $\le$ and
$\lleq_{x0}$ equal to $<$. Although $\neg \p_2$ does not hold now, we
can safely take $\b_2$ to be $\ceil{-Z_{x0}} + (<, -L_y)$ as its value
is in fact equal to $-Z_{xy} + (<,1)$ in this case. Summarizing the
above two paragraphs, we can rewrite $\b_2$ as follows:

\begin{equation}\label{b2-final}
  \b_2 = \begin{cases}
    (<, \infty) & \text{ if } \sem{G'_2} = \es \\
    -Z_{xy} + (<,1) & \text{ if } \sem{G'_2} \neq \es \text{ and } Z_{xy} \le Z_{x0} + (\le,L_y) \\
    \ceil{-Z_{x0}} + (<, -L_y) & \text{ if } \sem{G'_2} \neq \es
    \text{ and } Z_{xy} > Z_{x0} + (\le, L_y) \\
  \end{cases}
\end{equation}

We are now in a position to determine $\b$ as $\min\{\b_1, \b_2
\}$. Recall that we are in the case where $Z_{x0} \le (\le, -U_x)$ and
we have established that $\sem{G'}$ is non-empty. Now since $\sem{G'}
= \sem{G'_1} \cup \sem{G'_2}$ by construction, both of them cannot be
simultaneously empty. Hence from Equations \ref{b1} and
\ref{b2-final}, we get $\b$, the $\min\{\b_1, \b_2 \}$ as:

\begin{equation}\label{b}
  \b_ = \begin{cases}
    \ceil{-Z_{xy}} & \text{ if } Z_{xy} \le Z_{x0} + (\le,L_y) \\
    \ceil{-Z_{x0}} + (<, -L_y) & \text{ if } Z_{xy} > Z_{x0} + (\le, L_y) \\
  \end{cases}
\end{equation}

There remains one last reasoning. To prove the lemma, we need to show
that $\b = \max\{ \ceil{-Z_{xy}}, \ceil{-Z_{x0}} + (<,-L_y)\}$.  For
this it is enough to show the following two implications:

\begin{align*}
  Z_{xy} \le Z_{x0} + (\le, L_y) \imp \ceil{-Z_{xy}} \ge
  \ceil{-Z_{x0}} + (<, -L_y) \\
  Z_{xy} > Z_{x0} + (\le, L_y) \imp \ceil{-Z_{xy}} \le
  \ceil{-Z_{x0}}+(<,-L_y)
\end{align*}
We prove only the first implication. The second follows in a similar
fashion.  Let us consider the notation $(\lleq_{xy}, c_{xy})$ and
$(\lleq_{x0}, c_{x0})$ for $Z_{xy}$ and $Z_{x0}$ respectively.  So we
have:

\begin{align*}
  & (\lleq_{xy}, c_{xy}) \le (\lleq_{x0},c_{x0}) + (\le, L_y) \\
  \imp~ & (\lleq_{xy}, c_{xy}) \le (\lleq_{x0}, c_{x0} + L_y)
\end{align*}
If the constant $c_{xy} < c_{x0} + L_y$, then $-c_{xy} > -c_{x0} -
L_y$ and we clearly get that $\ceil{-Z_{xy}} \ge \ceil{-Z_{x0}} +
(<,-L_y)$. If the constant $c_{xy} = c_{x0} + L_y$ and if $\lleq_{x0}$
is $\le$, then the required inequation is trivially true; if
$\lleq_{x0}$ is $<$, it implies that $\lleq_{xy}$ is $<$ too and
clearly $\ceil{(<,-c_{xy})}$ equals $ \ceil{(<,-c_{x0})} + (<,-L_y)$.
\end{proof}

\medskip

We have a simple method that tells us when an LU-region $\lureg{v}$
does not intersect a zone $Z'$
(Proposition~\ref{prop:neg-cycles-in-min}). We have also characterized
the potential valuation $v$ from $Z$ that could satisfy the
non-intersection condition with $Z'$
(Proposition~\ref{prop:minimum-val-of-xy}). This gives the necessary
tools to solve inclusion $Z \not \incl \alu(Z')$.  The following
theorem presents the efficient inclusion test.

\begin{theorem}\label{thm:effic-inc-test}
  Let $Z, Z'$ be non-empty zones. Then, $Z \not \incl \alu(Z')$ iff
  there exist two variables $x,y$ such that:
  \begin{equation*}
    Z_{x0} \ge (\le,-U_x)~\text{and}~Z'_{xy} <
    Z_{xy}~\text{and}~Z'_{xy} + (<,-Ly) <  Z_{x0}
  \end{equation*}
\end{theorem}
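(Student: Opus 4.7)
The plan is to combine the structural results already established to reduce $Z \not\incl \alu(Z')$ to an algebraic condition on DBM entries, then eliminate the existential quantifier over regions using Lemma~\ref{lem:fst-11}. The first step, obtained by chaining Proposition~\ref{prop:alu-region-union}, Lemma~\ref{prop:incl_to_int}, Lemma~\ref{lem:graph_of_alui}, and Lemma~\ref{prop:cycles}, is: $Z \not\incl \alu(Z')$ iff there exists a region $R$ intersecting $Z$ such that $\min(\dg{G_R}, G_{Z'})$ has a negative cycle. By Proposition~\ref{prop:2-clock}, such a cycle has two possible shapes, parametrised by clocks $x, y$: case~(1) with $Z'_{xy} + R_{yx} < (\le,0)$, $x \in \Bb_R \cup \Ll_R$, $y \in \Bb_R \cup \Uu_R$; and case~(2) with $R_{0x} + Z'_{xy} + (<,-L_y) < (\le,0)$, $x \in \Bb_R \cup \Ll_R$, $y \in \Ll_R \cup \Mm_R$.

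The next step is to fix a candidate pair $x, y$ and minimize over $R$. For case~(2), the cycle inequality is monotone in $R_{0x}$, so it reduces to requiring the smallest admissible $R_{0x}$ to satisfy the cycle condition. The smallest such $R_{0x}$ over regions intersecting $Z$, read off from the proof of Lemma~\ref{lem:fst-11}, is $\ceil{-Z_{x0}}$, and the side constraint $R_{0x} \le U_x$ unpacks, by a case split on the type of $Z_{x0}$, exactly to condition (a); a direct weight-arithmetic check shows $\ceil{-Z_{x0}} + Z'_{xy} + (<,-L_y) < (\le,0)$ iff condition (c). For case~(1), applying Lemma~\ref{lem:fst-11} specialised to the sub-case $v_y \le L_y$ (with $\a_y = L_y$) yields the minimum $R_{yx}$; the inequality $R_{yx} < -Z'_{xy}$ factors into one inequality equivalent to (b), via the term $\ceil{-Z_{xy}}$, and one that together with (a) forces (c). For the converse, given (a), (b), (c), one picks a region $R$ with $R_{0x} = \ceil{-Z_{x0}} \le U_x$: if this region can additionally have $R_{0y} > L_y$, case~(2) witnesses via (c); otherwise every such region has $R_{0y} \le L_y$, and (b) supplies the case~(1) witness with $R_{yx} < -Z'_{xy}$.

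The main obstacle will be the weight-arithmetic bookkeeping for strict versus non-strict inequalities through the $\ceil{\cdot}$ operation, the $+\,(<,-L_y)$ edge, and the sign changes in $-Z'_{xy}$: the precise choice of $(<,-L_y)$ rather than $(\le,-L_y)$ in the theorem is essential, and verifying that it encodes both cases uniformly requires careful case analysis on whether $Z_{x0}$ and $Z'_{xy}$ are of type $\le$ or $<$. A secondary subtlety is that Lemma~\ref{lem:fst-11} is stated for regions over the max-bound $\a = \max(L, U)$; extracting the sub-statement needed for case~(1) means isolating the $v_y \le L_y$ part of its proof, which can be formalised as a short auxiliary claim if needed. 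Putting these pieces together yields the biconditional of the theorem.
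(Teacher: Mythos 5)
Your skeleton is exactly the paper's: reduce $Z\not\incl\alu(Z')$ to the existence of a region $R$ intersecting $Z$ satisfying one of the two conditions of Proposition~\ref{prop:2-clock}, then eliminate the quantifier over $R$ by computing, via Lemma~\ref{lem:fst-11}, the minimal region weights over the regions that intersect $Z$ and place $x,y$ in the required blocks. However, two of your intermediate claims would fail as stated, and they are precisely where the paper has to do its real work.

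First, in case~(2) you minimize $R_{0x}$ over \emph{all} regions intersecting $Z$ and read off $\ceil{-Z_{x0}}$. But the witness region must also satisfy $y\in\Ll_R\cup\Mm_R$, i.e.\ $v_y>L_y$, so the minimization has to be carried out over the restriction of $Z$ to $v_x\le U_x\wedge v_y>L_y$ (the paper's graph $G_2$), where the shortest path from $x$ to $0$ is $\min(Z_{x0},\,Z_{xy}+(<,-L_y))$ and the minimal $R_{0x}$ is $\max(\ceil{-Z_{x0}},\,\ceil{-(Z_{xy}+(<,-L_y))})$. Dropping the second term is not harmless: it is exactly what yields the clause $Z'_{xy}\le(\le,c_{xy}-1)$ in the paper's characterization of $\cTwo$, and hence the conjunct $Z'_{xy}<Z_{xy}$ of the theorem. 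With only $\ceil{-Z_{x0}}$ your forward direction derives (a) and (c) from case~(2) but cannot derive (b), so the "only if" direction is incomplete. The analogous restriction (to $v_x\le U_x\wedge v_y\le L_y$) is also needed in case~(1); your remark about "isolating the $v_y\le L_y$ part" gestures at this but does not supply it.

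Second, your converse dichotomy --- pick the region with $R_{0x}=\ceil{-Z_{x0}}$ and branch on whether it can have $R_{0y}>L_y$ --- breaks on a boundary case. When every relevant region has $R_{0y}\le L_y$ you claim (b) yields a case~(1) witness, but case~(1) additionally requires $\ceil{-Z_{x0}}+(\le,-L_y)+Z'_{xy}<(\le,0)$, i.e.\ $Z'_{xy}+(\le,-L_y)<Z_{x0}$, which is strictly \emph{stronger} than condition (c), $Z'_{xy}+(<,-L_y)<Z_{x0}$. The two differ exactly when $\lleq'_{xy}=\lleq_{x0}=\le$ and $c'_{xy}-L_y=c_{x0}$ (the paper's Equation~\eqref{eq:star}); there case~(1) genuinely fails even though regions with $v_y\le L_y$ exist, and one must show that case~(2) nevertheless holds --- which the paper does by deriving $Z_{0y}+(<,-L_y)\ge(\le,0)$ and $(\le,U_x)+Z_{xy}+(<,-L_y)\ge(\le,0)$ from canonicity of $G_Z$ together with (a) and (b). Your plan has no branch that produces a witness in this situation, so the "if" direction has a hole. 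Both gaps are repairable by following the paper's Steps~1--4, but they are not mere bookkeeping: each requires an extra argument you have not sketched.
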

\begin{proof}
  From Proposition~\ref{prop:incl-int}, we know that $Z \not \incl
  \alu(Z')$ iff there exists a valuation $v \in Z$ such that
  $\lureg{v}$ does not intersect $Z'$.

  From Proposition \ref{prop:neg-cycles-in-min}, we know that
  $\lureg{v} \cap Z'$ is empty iff there exists a variable $x$ such
  that $v_x \le U_x$ and a variable $y$ such that:
  \begin{equation}\label{eq:recall-prop}
    Z'_{xy} + \lureg{v}_{yx} < (\le,0)
  \end{equation}
  This is possible for variables $x,y$ iff the least value of
  $\lureg{v}_{yx}$ from among the valuations in $Z$ satisfies the
  inequation~(\ref{eq:recall-prop}) with $Z'_{xy}$.

  This is where we use Proposition
  \ref{prop:minimum-val-of-xy}. According to this proposition, for
  (\ref{eq:recall-prop}) to be true for some valuation $v \in Z$, we
  would need $Z_{x0} \ge (\le, -U_x)$ and:
  \begin{equation}\label{eq:thm-1}
    Z'_{xy} + \ceil{-Z_{xy}} < (\le,0) ~\text{and}~Z'_{xy} +
    \ceil{-Z_{x0}} -(<,L_y) < (\le,0)
  \end{equation}

  Consider the first inequality: $Z'_{xy} + \ceil{-Z_{xy}} <
  (\le,0)$. Let $Z_{xy}$ be $(\lleq_{xy}, c_{xy})$. If $\lleq_{xy}$ is
  the weak inequality $\le$, then $\ceil{-Z_{xy}}$ is $(\le, -c_{xy})$
  and hence the condition becomes: $Z'_{xy} + (\le, -c_{xy}) < (\le,
  0)$. This is equivalent to saying $Z'_{xy} < (\le, c_{xy})$, that
  is, $Z'_{xy} < Z_{xy}$. Now, if $\lleq_{xy}$ is the strict
  inequality $<$, then $\ceil{-Z_{xy}}$ becomes $(<, -c_{xy} + 1)$ and
  hence the condition becomes: $Z'_{xy} + (<, -c_{xy} + 1) < (\le,
  0)$. This is equivalent to saying $Z'_{xy} < (<, c_{xy})$. In both
  cases, the first inequality of Equation (\ref{eq:thm-1}) becomes
  $Z'_{xy} < Z_{xy}$.

  By a similar reasoning, the second inequality of Equation
  (\ref{eq:thm-1}) can be seen to correspond to $Z'_{xy} + (<, -L_y) <
  Z_{x0}$. This proves the theorem.
\end{proof}
 
The $Z \not\incl \alu(Z')$ test involves a comparison of corresponding
edges in the distance graphs $G_Z$ and $G_{Z'}$, so it takes in the
worst case a $\Oo(|X|^2)$ number of steps. Notice that in fact the
test requires only two tests for every pair of clocks.

 \section{Conclusions}

Traditional methods for timed automata reachability \emph{store
  abstractions} of zones for termination. Therefore, only convex
abstractions have been used in implementations. We have proposed to
store zones and use abstractions indirectly by means of inclusion
tests $Z \incl \abs(Z')$. This allows us to use non-convex
abstractions while still working with zones. The coarser the
abstraction $\abs$, the higher is the possibility of inclusion, and
hence smaller would be the reachability tree that is explored. For
this construction to work efficiently, one also needs an efficient
inclusion test.

In this paper, we have given a complete solution to using the $\alu$
abstraction for the reachability algorithm. Firstly, we have given an
$\Oo(|X|^2)$ test for $Z \incl \alu(Z')$ inclusion. This is the
cornerstone of our approach since this test is used in the inner
loop of the algorithm. Our test has the same complexity as
$Z \incl Z'$ test used in the traditional algorithm. We have also
shown that the $\alu$ abstraction is the coarsest abstraction that
is sound and complete with respect to reachability for all automata
with the same $LU$-bounds. The result showing
that $\alu$ abstraction is the coarsest possible is quite
unexpected. It works thanks to the observation that when doing forward
exploration it is enough to consider only time-elapsed zones. This
result explains why after $\Extra_{LU}^+$
from~\cite{Behrmann:STTT:2006} there have been no new abstraction
operators~\cite{bouyer-hab2009}. Indeed it is not that easy to find a
better zone inside $\alu$ abstraction than that given by
$\Extra_{LU}^+$ abstraction.

The maximality result for $\alu$ shows that to improve reachability
testing even further we will need to look at new structural properties
of timed automata, or to consider more refined algorithms than forward
exploration. A work in this direction is~\cite{CAV}.


\section*{References} 
\bibliographystyle{plain} \bibliography{m}

\end{document}